\apptocmd{\thebibliography}{\scriptsize}{}{}
\newcommand{\even}{\mathcal{E}}
\newcommand{\un}{\mathcal{U}}
\newcommand{\EE}{\mathcal{E}}
\newcommand{\odd}{\mathcal{O}}
\newcommand{\U}{\mathcal{U}}
\newcommand{\A}{\mathcal{A}}
\newcommand{\p}{\mathcal{P}} %added by Meghana.
\newcommand{\X}{\mathcal{X}}
\newcommand{\Ecomp}{{sink component} }
\newcommand{\Ucomp}{{\em non-sink component} }
\newcommand{\rmm} {{\em rmm }}
\newcommand{\REM}[1]{}
\newcommand{\shP}{\mbox{{\sf \#P}}}
\newcommand{\shBPM}{\mbox{{\sf \#BPM}}}
\newcommand{\shRMM}{\mbox{{\sf \#RMM}}}
\newcommand{\fpras}{\mbox{{\sf FPRAS}}}
\newenvironment{appendix-theorem}[1]{\vspace{\theorempreskipamount}\noindent{\bf Theorem~#1~} \em }{\vspace{\theorempostskipamount}}
\newenvironment{appendix-lemma}[1]{\vspace{\theorempreskipamount}\noindent{\bf Lemma~#1~} \em }{\vspace{\theorempostskipamount}}
\newenvironment{appendix-corollary}[1]{\vspace{\theorempreskipamount}\noindent{\bf Corollary~#1~} \em }{\vspace{\theorempostskipamount}}
\title{Rank-Maximal Matchings -- Structure and Algorithms}
\author{Pratik Ghosal$^1$
 \and
  Meghana Nasre$^2$
\and
Prajakta Nimbhorkar\thanks{Part of the work has been done while the
author was on a sabbatical to the Institute of Mathematics of the Czech Academy
of Sciences, Prague.}$^3$
 }
\institute{University of Wroc{\l}aw, Poland
%Email: {\tt pratikg@iiitd.ac.in}.
 {\tt (pratikghosal20082@gmail.com)}
\and
Indian Institute of Technology Madras 
{\tt (meghana@cse.iitm.ac.in)}
\and
Chennai Mathematical Institute 
 {\tt (prajakta@cmi.ac.in)}
}
\begin{document}
\maketitle
\begin{abstract}
%MN: June 11, 2014: commentend and reworded to save space.
%We study the problem of matching a set of applicants $\A$ to a set of posts 
%$\p$ where
%applicants have preferences over a subset of posts. 
%We consider rank-maximal matchings.
Let $G = (\A \cup \p, E)$ be a bipartite graph  where $\A$ denotes
a set of agents, $\p$ denotes a set of posts and  ranks 
on the edges
denote preferences of the agents over posts. A matching 
$M$ in $G$ is rank-maximal if it matches the maximum number of applicants to 
their top-rank post, subject to this, 
the maximum number of applicants to their second rank post and so on.

%Prajakta, I feel show its applications is a loose term. Further, we also do not have many applications.
In this paper, we develop a {\em switching graph} characterization of 
rank-maximal matchings, which is a useful tool that encodes all rank-maximal 
matchings in an instance.
The characterization leads to simple and efficient algorithms for several 
interesting problems. In particular, we give an efficient algorithm to compute 
the set of {\em rank-maximal pairs} in an
instance. We show that the problem of counting the number of rank-maximal 
matchings is $\#P$-Complete and also give an FPRAS for the problem. Finally, we
consider the problem of 
deciding whether a rank-maximal matching is {\em popular} among all the 
rank-maximal matchings in a given instance, and give an efficient algorithm for
the problem.% using our switching graph.
%of constructing a rank-maximal matching, which is popular
%among rank-maximal matchings, and give partial results. 
 
%We believe that the switching graph will be a useful tool to develop other 
%generalizations of the rank-maximal matchings problem.
\end{abstract}

\section{Introduction}
We consider the problem of matching applicants to posts where applicants have 
preferences over posts. This problem is motivated by several important 
real-world applications like allocation of graduates to training positions 
\cite{HZ79} and families to government housing \cite{Yuan96}.
The input to the problem is a bipartite graph $G = (\A \cup \p, E)$, 
where $\A$ is a set of applicants, $\p$ is a set of posts,
and the set $E$ can be partitioned as $E = E_1 \cup \ldots \cup E_r$, where 
$E_i$ contains the edges of rank~$i$. An edge $(a,p) \in E_i$ if $p$ is an 
$i$th choice of $a$.
An applicant $a$ prefers a post $p$ to $p'$ if, for some $i < j$, 
$(a, p) \in E_i$ and $(a, p') \in E_j$. Applicant $a$ is indifferent between 
$p$ and $p'$ if $i = j$.
This ranking of posts by an applicant is called {\em the preference list} of the applicant.
%Each agent has a {\em preference list}, which is a subset of posts ranked 
%in an order of preference, possibly involving ties. An edge $(a,p)$ has rank $i$ if $p$ is an $i$th choice of $a$.
When applicants can be indifferent between posts, preference lists are said to 
contain ties, else preference lists are strict.
%MN: 22/4 -- moved this to prelims.
%A matching $M$ of $G$ is a subset of edges, no two of which share an end-point.
%For a matched vertex $u$, we denote its partner in $M$ by $M(u)$.
%Given such an instance, our goal is 
%to compute a matching of applicants to posts that is {\em optimal} with respect to the preferences
%of the applicants. 

The problem of matching under one-sided preferences %is well-studied in the 
%literature 
has received lot of attention and there exist
several notions of optimality like pareto-optimality~\cite{ACMM04}, 
rank-maximality~\cite{IKMMP06}, popularity~\cite{AIKM07}, and fairness.
We
%MN June-17-2014.
focus on the well-studied notion of {\em rank-maximal} matchings which are
guaranteed to exist in any instance. %, which is well-studied in 
%literature.   
 Rank-maximality was first studied under the name of {\em greedy matchings} by 
Irving~\cite{Irving03},
%was first studied by Irving \cite{Irving03} under the name of {\em greedy matchings}, 
who also gave an algorithm for
computing such matchings in case of strict lists. A rank-maximal matching 
matches maximum number of applicants to their rank~$1$ posts, subject to 
that, maximum number of applicants to their rank~$2$ posts and so on. 
%A rank-maximal matching is guaranteed to exist. 
Irving~et~al.\cite{IKMMP06} gave an $O(\min(n+r, r\sqrt{n})m)$ time algorithm 
to compute a rank-maximal matching. This algorithm\cite{IKMMP06} not only works for strict case, but also for tied case.
Here $n = |\A| + |\p|$, $m = |E|$, and $r$ denotes the maximal rank in the 
instance. The weighted and capacitated versions of this problem have been
studied in \cite{KS06} and \cite{Paluch13} respectively.
%Given an instance $G = (\A \cup \p, E)$, a rank-maximal matching can be efficiently computed using the algorithm given
%by Irving et al. \cite{IKMMP06}. We remark that a given instance may admit more than one rank-maximal matching.

In this paper, we study the structure of the rank-maximal matchings using the 
notion of a {\em switching graph}. %This was introduced in an alternative 
%notion of optimality namely {\em popular matchings}.
%the instance. 
This notion was introduced in the context of {\em popularity} which is an alternative 
criterion of optimality in the one-sided preferences model. See 
%MN: June-10-2014. Prajakta, can we give the AIKM07 as a reference?
\cite{AIKM07} for a definition of popular matchings.
%McDermid and Irving~\cite{MI11} 
%in the context of {\em popular matchings} 
%Popular matchings are an alternative solution concept to rank-maximal matchings
%and  
%A matching $M$ is popular in an instance $G = (\A \cup \p, E)$, if no majority 
%of applicants prefer
%another matching $M'$ to $M$. %An instance may not admit any popular matching but if it does, it may admit multiple
%popular matchings. 
McDermid and Irving~\cite{MI11} studied the switching graph of popular matchings
for strict instances, and 
Nasre~\cite{Nasre13} extended it to the case of ties. This characterization
has turned out to be useful for several problems like counting the number of
popular matchings in strict instances, computing an {\em optimal} popular 
matching, developing an optimal manipulation strategy for an agent etc.
%problem when preference lists may contain ties. 
%The main  usefulness of the switching graph is that it enables efficient construction of any  other popular matching
%in the instance; additionally the switching graph has found several interesting applications in \cite{MI11,Nasre13}. 
% in \cite{MI11} and \cite{Nasre13} the authors have exploited the switching graph to obtain
%efficient algorithms for several interesting problems.
% We refer the reader to \cite{}
%for a formal definition of popular matchings. 
%Informally, a matching $M$ is 
%popular in $G$ if no majority of agents want to switch
%from $M$ to any other matching $M'$ in $G$. Given a popular matching $M$ in $G$,
%a switching graph is a directed graph constructed with respect to
%$M$ and can be used to efficiently compute {\em all} popular matchings in $G$. 
%McDermid and Irving~\cite{MI11} defined the switching graph for instances of the 
%popular matchings
%problem with strict lists. Nasre~\cite{Nasre13} extended it for the popular matchings 
%problem when preferences are allowed to contain ties.
%Both McDermid and Irving and Nasre have exploited the structure of the switching graph to develop
%algorithms for several interesting questions. 
%In \cite{MI11}, McDermid and Irving give an efficient algorithm
%for counting the number of popular matchings in an instance with strict lists. 
%In \cite{Nasre13}, Nasre gives an efficient
%algorithm to optimally falsify the preference list of a single agent to get 
%matched to better posts. 

%MN: June-11-2014, removed In this context. 
%In this context
It is natural to extend the switching graph characterization
to analyze rank-maximal matchings.
%Such a switching graph characterization was not known for the 
%rank-maximal matchings problem. 
Besides being interesting in its own right,
it turns out to be useful in answering several natural 
questions. For instance, given instance $G = (\A \cup \p, E)$,
is there a rank-maximal matching in $G$ which matches 
an applicant $a$ to a particular post $p$? 
%is it easy to count the number of rank-maximal matchings at least in
%case of strict lists, 
Is a rank-maximal matching preferred 
by a majority of applicants over other rank-maximal matchings in the instance?
%We develop the characterization and study useful properties of the switching graph.
%Motivated by the usefulness of the switching graph in the context of 
%popular matchings, we develop
%a switching graph characterization for the rank-maximal matchings problem,
%and show several applications. 
%We show several applications of the switching graph characterization.  
%Our contributions are:
We show the following new results in this paper:
\vspace{-0.05in}
\begin{itemize}
\item A switching graph characterization of the rank-maximal matchings problem, 
and its properties, using which, we answer the questions mentioned above.
%give the following new results.
%This characterization appears in Section \ref{sec:switch}, and we prove some
%properties of the switching graph. 
\item An efficient algorithm for computing 
the set of {\em rank-maximal pairs}. An
edge $(a, p) \in E$ is a rank-maximal pair if there exists a rank-maximal
matching in $G$ that matches $a$ to $p$. 
\item %In case of popular matchings with strict preference lists, there is an 
%efficient algorithm to count the number of popular matchings \cite{}. In 
%contrast to this, 
We show that 
the problem of counting the number of rank-maximal matchings is \shP-complete
even for strict preference lists. We then give an \fpras\ for the problem by  
reducing it to the problem of counting the number of perfect 
matchings in a bipartite graph. 
%The reduction uses the switching graph characterization.
%Our reduction uses the switching graph characterization and is described
%in Section \ref{sec:count}.
\item %An instance of the rank-maximal matchings problem may admit more than one
%rank-maximal matching. 
In order to choose one among possibly several rank-maximal matchings in a given
instance $G$,
%MN: June-11-2014: reworded to save space.
%we impose an additional optimality criteria -- popularity. 
we consider the question of
finding a rank-maximal matching that is popular among all the rank-maximal 
matchings in $G$. We call such a matching a {\em popular rank-maximal matching}.
%and consider the problem of computing such a matching.
%We make partial progress on this question. 
We show that, given
a rank-maximal matching, it can be efficiently checked whether it is a popular
rank-maximal matching. If not, we output a rank-maximal matching
which is more popular than the given one. 
%Again our algorithm in this case uses the switching graph.
%Kavitha and Nasre studied the notion of optimal popular matchings \cite{}.
%where they give efficient algorithms for finding a popular matching that is
%optimal with respect to some optimality criteria. In particular, 
%They show how
%to find a popular matching which is rank-maximal among all popular matchings in
%a given instance. Motivated by this, we study the question of finding a 
%rank-maximal matching that is popular among all rank-maximal matchings in a 
%given instance. We call it a {\em popular rank-maximal matching}.
%We make partial progress on this question. We show that, given
%a rank-maximal matching, it can be efficiently checked whether it is a popular
%rank-maximal matching. If not, our algorithm computes a rank-maximal matching
%which is more popular than the given one. 
%This algorithm appears in 
%Section \ref{sec:popular}.
\end{itemize}
%MN: June-11-2014: need to reword a little.
%An interesting aspect of our switching graph characterization is that it does 
%not follow in a straight forward manner from that of popular matchings. This
%can be attributed to the following fact.
We remark that the switching graph is a weighted directed graph constructed with respect to
a particular matching. In case of popular matchings, 
it is known from \cite{AIKM07} that, there are at most two distinct ranked 
posts in an applicant's preference list,
to which he can be matched in any popular matching. This results in a switching
graph with edge-weights $\{+1,-1,0\}$. In case of rank-maximal matchings, the
situation becomes more interesting since an applicant can be matched to one among
several distinct ranked posts, and the edge-weights in the switching graph could be 
arbitrary. Surprisingly, the characterization still turns out to be similar to 
that of popular matchings, although the proofs are significantly different. We
expect  that the switching graph  will find several applications apart from those shown in this 
paper.

\vspace{-0.1in}
\section{Preliminaries}\label{sec:prelim}
\vspace{-0.05in}
A matching $M$ of $G$ is a subset of edges, no two of which share an end-point.
For a matched vertex $u$, we denote by $M(u)$ its partner in $M$.
%We first review some well-known properties of maximum matchings in bipartite graphs. Then we define
%rank-maximal matchings, describe
%the algorithm of Irving~et~al.~\cite{IKMMP06} for computing a rank-maximal matching, and also recall some of its 
%invariants.

\paragraph{Properties of maximum matchings in bipartite graphs:} 
%MN: removed the paragraph to make more space.
%\paragraph{Properties of maximum matchings in bipartite graphs:} 
Let $G = (\A \cup \p, E)$ be a bipartite graph and let $M$ be a maximum matching in $G$.
The matching $M$ defines a partition of the vertex set $\A \cup \p$ into three 
disjoint sets, defined below:
\begin{definition}[Even, odd, unreachable vertices]\label{def:eou}
A vertex $v \in \A \cup \p$ is \emph {even} (resp. \emph {odd}) if there is an 
even (resp. odd) length alternating path with respect to $M$ from an unmatched 
vertex to $v$.
%has odd (resp. even) number of intermediate vertices.
A vertex $v$ is \emph {unreachable} if there is no alternating path from an unmatched vertex to $v$.
\end{definition}
%Denote by $\EE$, $\odd$, and $\U$ the sets of even, odd, and unreachable vertices, respectively, in $G$.
The following lemma is well-known in matching theory; see \cite{GGL95new} or \cite{IKMMP06} for a proof.

\begin{lemma}[\cite{GGL95new}]
\label{lem:node-class}
Let $\EE$, $\odd$, and $\U$ be the sets of even, odd, and unreachable vertices 
defined by a maximum matching $M$ in $G$. Then,
\begin {itemize}
\item [(a)] $\EE$, $\odd$, and $\U$ are disjoint, and are the same for all
 the maximum matchings in $G$.
\item [(b)] In any maximum matching of $G$, every vertex in $\odd$ is matched with a vertex in
$\EE$, and every vertex in $\U$ is matched with another vertex in $\U$.
The size of a maximum matching is $|\odd| + |\U|/2$.
\item  [(c)] No maximum matching of $G$ contains an edge with one end-point
 in $\odd$ and the other in $\odd \cup \U$.
Also, $G$ contains no edge with one end-point in $\EE$ and the other in $\EE \cup \U$.
\end {itemize}
\end{lemma}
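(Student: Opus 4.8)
The plan is to work throughout with the fixed bipartition $\A \cup \p$ and to exploit two facts repeatedly: that $M$ is maximum, so $G$ has no $M$-augmenting path, and that $G$ is bipartite, so it has no odd cycle. First I would record the basic parity facts. An unmatched vertex is even via the length-$0$ path, and since an alternating path leaving an exposed vertex must begin with a non-matching edge, an even vertex other than an exposed one is entered by a matching edge (hence is matched, with its path-predecessor being its partner), whereas an odd vertex is entered by a non-matching edge. Because a simple path in a bipartite graph has length whose parity is determined solely by whether its endpoints lie on the same side, I would observe that $v \in \EE$ exactly when $v$ is reachable by an alternating path from an exposed vertex on $v$'s own side, and $v \in \odd$ exactly when it is reachable from an exposed vertex on the opposite side. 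Since $\U$ is by definition the complement of $\EE \cup \odd$, the only disjointness left in part (a) is $\EE \cap \odd = \emptyset$.

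For that disjointness, which I expect to be the technical crux, suppose $v \in \EE \cap \odd$, witnessed by an even path $P$ from an exposed $u_1$ and an odd path $Q$ from an exposed $u_2$ lying on opposite sides. Since $P$ enters $v$ by a matching edge and $Q$ enters $v$ by a non-matching edge, concatenating $P$ with the reverse of $Q$ yields an alternating walk from $u_1$ to $u_2$ that respects alternation at $v$ and begins and ends with non-matching edges at the two exposed endpoints. Were this walk simple it would be an $M$-augmenting path, contradicting maximality; the genuine difficulty is that $P$ and $Q$ may share vertices, so I would take a counterexample minimizing $|P| + |Q|$ and argue that at the first shared vertex the paths can be spliced to produce either a strictly shorter bad configuration or an honest augmenting path, using bipartiteness to exclude the parity clash an odd closed walk would create. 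This contradiction establishes well-definedness.

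Next, fixing one maximum matching, I would settle the structural claims (b) and (c) by elementary path-extension. An odd or unreachable vertex cannot be exposed (exposed vertices are even), hence is matched. If $v$ is even and matched, the last edge of its even path is the matching edge to its partner, so the partner is reached by an odd path and lies in $\odd$; symmetrically the partner of an odd vertex is even, giving the even--odd pairing, and together with the fact that exposed vertices are even and unmatched this yields the count $|\odd| + |\U|/2$. For $v \in \U$, its partner cannot be even or odd, else $v$ would be odd or even by the pairing just shown, so the partner lies in $\U$, proving unreachable vertices are matched among themselves. For the edge claim in (c), if $(x,y) \in E$ with $x \in \EE$, then $(x,y)$ is not the matching edge at $x$ (that partner is odd), hence is non-matching; appending it to an even alternating path reaching $x$ gives an alternating path reaching $y$ of odd length, forcing $y \in \odd$ and thus excluding $y \in \EE \cup \U$. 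The matching-edge statement of (c) is then immediate, since every matching edge joins an odd vertex to its even partner.

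Finally, for the invariance half of (a) I would compare two maximum matchings $M$ and $M'$ through their symmetric difference $M \oplus M'$. As both are maximum, no component is an augmenting path, so every component is an even alternating cycle or an even alternating path whose two ends are exposed by $M$ and by $M'$ respectively. I would then convert any $M$-alternating path witnessing a vertex's class into an $M'$-alternating path of the same length parity by rerouting it along these components wherever it traverses an edge whose status changes between $M$ and $M'$; since the components have even length, the parity, and hence the class by the side-parity correspondence of the first step, is preserved, giving $\EE_M = \EE_{M'}$ and $\odd_M = \odd_{M'}$ and therefore equality of the $\U$'s. I expect this rerouting, much like the splicing in the disjointness step, to be the most delicate bookkeeping, and it is precisely where the maximality of both matchings is used.
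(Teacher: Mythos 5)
The paper itself gives no proof of this lemma (it cites \cite{GGL95new} and \cite{IKMMP06}), so your argument has to stand on its own; most of it does. The side-parity correspondence, the disjointness argument for $\EE\cap\odd=\emptyset$, and the per-matching claims (b)--(c) are sound. In the disjointness step the minimal-counterexample device is not even needed: take $w$ to be the first vertex of $P$ lying on $Q$; bipartiteness excludes the two cases in which $P$ and $Q$ reach $w$ by edges of equal matching status (the two prefixes would place $w$ on opposite sides of the bipartition simultaneously), and in the two mixed cases the spliced walk is simple by the choice of $w$, alternates at $w$, and is an $M$-augmenting path between two exposed vertices, contradicting maximality. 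One detail you gloss over in (b)--(c): appending an edge $(x,y)$ to an alternating path reaching $x$ yields a \emph{path} only if $y$ is not already on it; when $y$ does lie on it, your own side-parity correspondence closes the hole, since the prefix of the path up to $y$ is an alternating path of exactly the parity you need.

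The genuine gap is the invariance half of (a). The proposed ``rerouting'' of an $M$-witness path along the components of $M\oplus M'$ is not a well-defined operation, and I do not see how to make it one: the witness path can meet several components in many segments; the detours need not avoid the rest of the path, so the surgery produces a walk rather than a simple path; and the starting vertex is $M$-exposed but in general $M'$-covered, so it cannot even serve as the origin of the desired $M'$-alternating path. Evenness of the components repairs none of this. The standard fix is short and uses only tools you already have: prove that $v\in\EE$ (with respect to $M$) if and only if some maximum matching of $G$ misses $v$. Forward: flip the even witness path $P$; then $M\oplus P$ is a maximum matching that misses $v$. Backward: if a maximum matching $N$ misses $v$, then either $v$ is $M$-exposed, or $v$ is an endpoint of a path component of $M\oplus N$; that component is even (neither matching admits an augmenting path), so its other endpoint is $M$-exposed and, read from that end, the component is an even $M$-alternating path to $v$. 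Hence $\EE$ is the matching-independent set of vertices missed by some maximum matching. Then $\odd$ is precisely the set of neighbours of $\EE$ --- each odd vertex is entered, on its witness path, by an edge from the even vertex preceding it, and conversely every neighbour of an even vertex is odd by your part (c) --- and $\U$ is the complement of $\EE\cup\odd$. All three classes are therefore matching-independent, and (b)--(c) for an arbitrary maximum matching follow by running your per-matching arguments on that matching.
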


\paragraph{Rank-maximal matchings:}
An instance of the rank-maximal matchings problem consists of a bipartite
graph $G=(\A\cup\p,E)$, where $\A$ is a set of applicants, $\p$ is
a set of posts, and % where applicants rank posts in order of their preference. That is
%the input is a bipartite graph $G = (\A \cup \p, E)$ where the edges in 
$E$ can be
partitioned as $E_1 \cup E_2 \cup \ldots \cup E_r$. Here $E_i$ denotes
the edges of rank~$i$, and 
$r$ denotes the maximum rank any applicant assigns to a post. An edge $(a,p)$
has rank $i$ if $p$ is an $i$th choice of $a$. 

\begin{definition}[Signature]\label{def:sig}
The {\em
signature} of a matching $M$ is defined as an $r$-tuple $ \rho (M) = (x_1,\ldots,x_r)$
where, for each $1\leq i \leq r$, $x_i$ is the number of applicants who are 
matched to their $i$th rank post in $M$. 
\end{definition}
Let $M$, $M'$ be two matchings in $G$, with signatures 
$\rho(M) = (x_1, \ldots, x_r)$
and $\rho(M') = (y_1, \ldots, y_r)$.
%denote the signatures of $M$ and $M'$ respectively.
Define $M \succ M'$ if $x_i = y_i$ for $1 \le i < k \leq r$ and $x_k >  y_k$.

\begin{definition}[Rank-maximal matching]\label{def:rmm}
A matching $M$ in $G$ is  rank-maximal if 
$M$ has the maximum signature under the above ordering $\succ$.
\end{definition}
Observe that all the rank-maximal matchings in an instance 
have the same cardinality and the same signature.

%MN: June-17-2014.
%\paragraph{Construction of Rank-maximal Matchings:}
\paragraph{Computing Rank-maximal Matchings:}
Now we recall Irving~et~al.'s algorithm~\cite{IKMMP06} for computing a 
rank-maximal matching in a given instance $G=(\A\cup\p,E_1\cup\ldots\cup E_r)$.
Recall that $E_i$ is the set of edges of rank $i$.
For the sake of convenience, for each applicant $a$, we add a dummy last-resort
post $\ell(a)$ at rank $r+1$ in $a$'s preference list, and refer to the modified
instance as $G$. This ensures that every rank-maximal matching is $\A$-complete
i.e. matches all the applicants. 

Let $G_i=(\A\cup\p, E_1\cup\ldots\cup E_i)$.
The algorithm starts with $G_1'=G_1$ and any maximum matching $M_1$ in $G'_1$.
%MN:June-12-2014: Prajakta I have restored the algo as it is with scriptsize. See if you like it
%this way. Else the whole thing is commented below. 
%\vspace{-0.1in}
%\begin{algorithm}[h]
%\scriptsize
%\begin{algorithmic}[1]
%\REQUIRE $G = (\A \cup \p, E_1 \cup E_2 \cup \dots \cup E_r)$.
%\ENSURE A rank maximal matching $M$ in $G$.
%\STATE Let $G_i = (\A \cup \p, E_1 \cup E_2 \cup \dots \cup E_i)$
%\STATE Construct $G_1' = G_1$. Let $M_1$ be a maximum matching in $G_1'$.
%\FOR {$i = 1 \ldots r $}
%\STATE Partition $\A \cup \p$ as $\odd_i, \even_i, \un_i$ with respect to $M_i$ in $G_i'$.
%\STATE \label{step:del1} Delete all edges of rank $j > i$ incident on vertices in $\odd_i \cup \un_i$.
%\STATE \label{step:del2}Delete all edges from $G'_i$ between a node in $\odd_i$ and a node in $\odd_i \cup \un_i$.
%\STATE Add edges in $E_{i+1}$ to $G_i'$; denote the resulting graph $G'_{i+1}$.
%\STATE Compute a maximum matching $M_{i+1}$ in $G_{i+1}$ by augmenting $M_i$.
%\ENDFOR
%\STATE \label{step:del3}Delete all edges from $G'_{r+1}$ between a node in $\odd_{r+1}$ and a node in $\un_{r+1}$.
%\STATE \label{step:graphGprime} Denote the graph $G'_{r+1}$ as $G'$.
%\STATE Return a rank-maximal matching $M = M_{r+1}$.
%\end{algorithmic}
%\caption{An algorithm to compute a rank-maximal matching from~\cite{IKMMP06}.}
%\label{algo:rmm-matching}
%\end{algorithm}
%\vspace{-0.1in}

\noindent\fbox{
\parbox{\textwidth}{For $i=1$ to $r$ do the following and output $M_{r+1}$:
\begin{enumerate}
\item Partition the vertices in $\A\cup\p$ into even, odd, and unreachable
as in Definition \ref{def:eou} and call these sets $\EE_i,\odd_i,\U_i$ 
respectively.
\item Delete those edges in $E_j, j>i$, which are incident on nodes in $\odd_i
\cup\U_i$. These are the nodes that are matched by every maximum matching in 
$G'_i$.
\item Delete all the edges from $G'_i$ between a node in 
$\odd_i$ and 
a node in $\odd_i\cup\U_i$. We refer to these edges as $\odd_i\odd_i$ and
$\odd_i\U_i$ edges respectively. These are the edges which do not belong to any
maximum matching in $G'_i$.
\item Add the edges in $E_{i+1}$ to $G'_i$ and call
the resulting graph $G'_{i+1}$.
\item Determine a maximum matching $M_{i+1}$ in $G'_{i+1}$ by augmenting $M_i$.
\end{enumerate}}
}

The algorithm constructs a graph $G'_{r+1}$. We construct a {\em reduced graph}
$G'$ by deleting all the edges from $G'_{r+1}$
between a node in $\odd_{r+1}$ and a node in $\odd_{r+1}\cup\U_{r+1}$.
The graph $G'$ will be used in subsequent sections.

%MN: June-17-2014. Changed properties -> Invariants.
%We note the following properties of Irving~et~al.'s algorithm:
We note the following invariants of Irving~et~al.'s algorithm:
\begin{enumerate}
\item [($I1$)] For every $1 \le i \le r$, every rank-maximal matching in $G_i$ 
is contained in $G'_i$.
\item[($I2$)] The matching $M_i$ is rank-maximal in $G_i$, and is a maximum 
matching in $G'_i$.
\item [($I3$)] If a rank-maximal matching in $G$ has signature $(s_1,\ldots,s_i,\ldots 
s_r)$ then $M_i$ has signature $(s_1,\ldots,s_i)$.
%\item [($I4$)] For every $i = 1, \ldots, r+1$, every rank-maximal matching matches all vertices in $\odd_i \cup \un_i$.
\item [($I4$)]The graphs $G'_i$, $1\leq i\leq r+1$
%MN June-15-2014. Removed the term reduced graph here to avoid confusion.
% referred to as the 
%reduced graph 
constructed at the end of iteration $i$ of Irving~et~ al.'s algorithm, 
and $G'$ are independent of 
the rank-maximal matching computed by the algorithm. This follows from 
Lemma \ref{lem:node-class} and invariant $I2$. 
\end{enumerate}

%Their algorithm works in stages where, in the
%$i$th stage, only the edges in $E_1\cup\ldots\cup E_i$, $1\leq i\leq r$ are
%considered. 

%\section{Switching Graph Characterization}\label{sec:switch}
\section{Switching Graph Characterization}\label{sec:switch}
In this section, we describe the switching graph characterization of 
rank-maximal matchings, 
and show its application in computing {\em rank-maximal
pairs}. %which are pairs $(a,p)$, $a\in \A, p\in \p$ such that $(a,p)$ is matched
%in some rank-maximal matching in $G$. 
   
Let $M$ be a rank-maximal matching in $G$ and let $G'=(\A \cup \p, E')$ be
the reduced graph as described in Section \ref{sec:prelim}. 

\begin{definition}[Switching Graph]\label{def:switch-graph}
The switching graph $G_M=(V_M,E_M)$ with respect to a rank-maximal matching $M$
is a directed weighted graph
with $V_M=\p$ and $E_M=\{(p_i,p_j)\mid \exists a\in \A, (a,p_i)\in M, 
\textrm(a,p_j)\in E'\}$. Further, weight of an edge $(p_i,p_j)$ is
$w(p_i,p_j)=rank(a,p_j)-rank(a,p_i)$, where $rank(a,p)$ is the 
rank of a post $p$ in the preference list of an applicant $a$. 
\end{definition}

Thus an edge $(p_i, p_j) \in E_M$
iff there exists an applicant $a$ such that $(a, p_i) \in M$ and $(a, p_j)$
is an edge in the graph $G'$. 
We define the following notation:
\begin{enumerate}
\item {\em Sink vertex:} 
A vertex $p$ of $G_M$ is called a {\em sink} vertex, if $p$
has no outgoing edge in $G_M$ and
$p \in \EE_1 \cap \EE_2 \cap \ldots \cap \EE_{r+1}$. Recall that 
$\EE_i$ is the set of vertices which were even in the graph $G'_i$ constructed
in the $i$th iteration of Irving~et~al.'s algorithm.%~\ref{algo:rmm-matching}.
\item {\em Sink and non-sink components of $G_M$:}
A connected component $\X$ in the underlying undirected graph of $G_M$ is
called a {\em sink component} if $\X$ contains one or more
sink vertices, and a {\em non-sink component} otherwise.
\item {\em Switching paths and switching cycles: }
A path $T = \langle p_0, p_1 \ldots, p_{k-1}\rangle$ in $G_M$
is called a {\em switching path} if $T$ ends in a sink vertex and $w(T) = 0$.
Here, $w(T)$ is the sum of the weights of the edges in $T$.
A cycle $C = \langle p_0,  \ldots, p_{k-1}, p_0\rangle$ in $G_M$ is called a 
{\em switching cycle} if $w(C) = 0$.
% where $w(C)$ is the sum of the weights
%of the edges in $C$.
\item {\em Switching operation:} 
%Our goal is to
%use switching paths and switching cycles in $G_M$ to {\em switch} to another
%rank-maximal matching in $G$.
Let $T=\langle p_0, p_1 \ldots, p_{k-1}\rangle$ be a switching path in $G_M$. 
Let $\A_T=\{a\in \A \mid M(a)\in T\}$. Further, let $M(a_i)=p_i$ for $0\leq i
\leq k-2$. We denote by 
$M'=M\cdot T$, the matching obtained by {\em applying} $T$ to $M$. Thus, for
$a_i\in \A_T$, $M'(a_i)=p_{i+1}$, and for $a\notin \A_T$, $M'(a)=M(a)$.
The matching $M\cdot C$, obtained by {\em applying} a switching cycle $C$ to $M$
is defined analogously. 
We also refer to $M\cdot C$ or $M\cdot T$ as a {\em 
switching operation}.
%Switching operation is used to {\em switch} to another rank-maximal matching.
\end{enumerate}
Figure~\ref{fig:example-ex2} illustrates an example instance along with its switching graph.
\vspace{-0.2in}
\begin{figure}[hbt]
\begin{minipage}[b]{0.45\linewidth}
\centering
\begin{equation*}
\setlength{\arraycolsep}{0.5ex}\setlength{\extrarowheight}{0.25ex}
\begin{array}{@{\hspace{1ex}}c@{\hspace{1ex}}
              @{\hspace{1ex}}c@{\hspace{1ex}}
              @{\hspace{1ex}}c@{\hspace{1ex}}
              @{\hspace{1ex}}c@{\hspace{1ex}}
              @{\hspace{1ex}}c@{\hspace{1ex}}
              @{\hspace{1ex}}c@{\hspace{1ex}}}
    a_{1}: & { p_1} \ & p_2 \ & p_3   \\[.5ex] 
    a_{2}: & {p_1} \ & p_2 \ & p_4  \ &\\[.5ex] 
    a_{3}: & p_1 \\ [.5ex] 
    a_{4}: & p_5 \ & p_6 \ & p_7 \\[.5ex] 
    a_{5}: & p_5 \ & p_6 \ & p_7 \\[.5ex] 
    a_{6}: & p_5 \ & p_6 \ & p_7 \\[.5ex] 
\end{array}
\end{equation*}
(a)
\end{minipage}
\begin{minipage}[b]{0.45\linewidth}
\centering

\psfrag{p1}{$p_1$}
\psfrag{p2}{$p_2$}
\psfrag{p3}{$p_3$}
\psfrag{p4}{$p_4$}
\psfrag{p5}{$p_5$}
\psfrag{p6}{$p_6$}
\psfrag{p7}{$p_7$}
\psfrag{p8}{$p_8$}
\psfrag{p9}{$p_9$}

\psfrag{-1}{$-1$}
\psfrag{1}{$1$}
\psfrag{2}{$2$}
\psfrag{-2}{$-2$}
\includegraphics[scale=0.4] {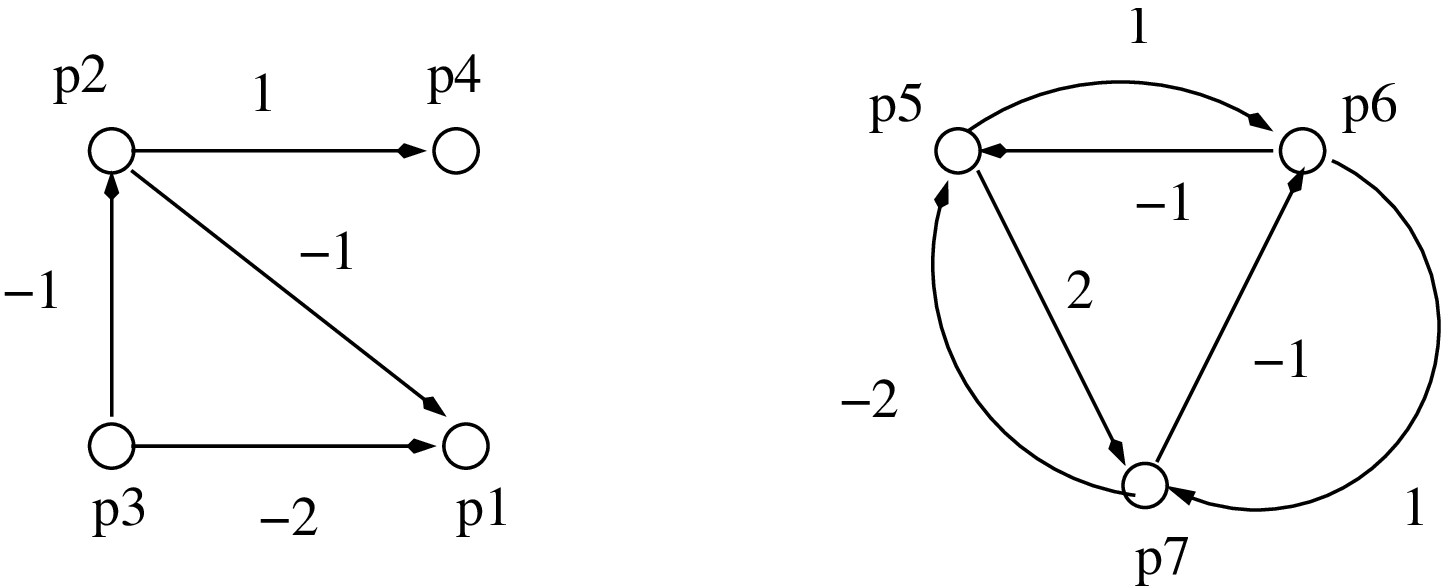}
(b)
%\end{figure}
\end{minipage}
\caption{(a) Preference lists of agents $\{a_1, \ldots, a_6\}$ in increasing 
order of ranks.
(b) Switching graph $G_M$ with respect to rank-maximal matching $M=\{(a_1,p_3),
(a_2,p_2),(a_3,p_1),(a_4,p_7),(a_5,p_5),(a_6,p_6)\}$. The vertex $p_4$ is the
only sink-vertex
and the path $(p_3, p_2, p_4)$ is a switching path. Note that every directed 
cycle is a switching cycle.}
\label{fig:example-ex2}
\end{figure}
%\vspace{-0.1in}
%See Appendix~\ref{app:sec-switch} for an example instance illustrating all these definitions.
%, where $C$ and $T$ are switching cycle and switching path
%respectively.
%Let $\A_T = \cup_{i=1}^{k-1} \{M(p_i) = a_i\}$
%Let $\A_T =  \{a_i : M(p_i) = a_i, \mbox { for $i = 0 \ldots k-2$ }\}$
%and  denote by $M' = M \cdot T$  the matching obtained by {\em applying} the switching path to
%$M$, that is, for $a_i \in \A_T$, $M'(a_i) = p_{i+1}$ whereas for $a \notin \A_T$, $M'(a) = M(a)$.
%Similarly, for a switching cycle $C$,
%define $\A_C = \cup_{i=1}^{k} \{M(p_i) = a_i\}$
%define $\A_C = \{a_i : M(p_i) = a_i, \mbox  {for $i = 0 \ldots k-1$ } \}$
%and  denote by $M' = M\cdot C$ the matching
%obtained by {\em applying} the switching cycle to $M$, that is, for $a_i \in \A_C$, $M'(a_i) = p_{(i+1) \mod k}$ whereas for $a \notin \A_C$,
%$M'(a) = M(a)$.
\subsection{Properties of the switching graph}\label{sec:switch-prop}
In this section, we prove several useful properties of the switching
graph by characterizing switching paths and switching cycles. 
%First, we show that a matching $M'$ obtained
%by a switching operation on a rank-maximal matching $M$
%is also rank-maximal. A natural question 
%that arises in this context 
%is the 
%recognition of switching paths and switching cycles. We show that every cycle
%in $G_M$ is a switching cycle. We also
%show that any path from a post $p\in \EE_1\cap\ldots\cap\EE_r$ to a sink in 
%$G_M$ is a switching path. We will use these properties in subsequent sections.
%Another interesting property of the switching graph is that the weights of all
%the paths from a post $p$ to a post $p'$ are equal. This is a strong and 
%interesting property, but we will not use it in subsequent sections. We 
%speculate that it might be useful in further applications of switching graph. 

In the following lemma, we show that a switching operation on a rank-maximal 
matching $M$ results in another rank-maximal matching in $G$. %{TODO: Meghana}

\begin{lemma}\label{lem:0wt-path}
Let $T$ (resp. $C$) be a switching path (resp. switching cycle) in $G_M$. 
Then, $M' = M \cdot T$ (resp. $M' = M \cdot C$) is a rank-maximal matching in 
$G$.
\end{lemma}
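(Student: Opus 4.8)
The plan is to show that $M'$ is a matching contained in the reduced graph $G'$ that matches every applicant and has exactly the same signature as $M$; since $M$ is rank-maximal and the signature order is total, this makes $M'$ rank-maximal as well. I would compare signatures through their prefix sums: for a matching $N$ write $\sigma_i(N)=\sum_{j\le i}x_j$ for the number of applicants that $N$ matches by an edge of rank at most $i$. Two $\A$-complete matchings have the same signature precisely when all the $\sigma_i$ agree, so it suffices to prove $\sigma_i(M')=\sigma_i(M)$ for every $i$.

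First I would check that $M'$ is a legal matching inside $G'$. Each reassigned edge $(a_t,p_{t+1})$ lies in $E'$ by the very definition of $E_M$, and every untouched edge of $M$ lies in $E'$ because the rank-maximal $M$ is contained in $G'$; since the posts of $T$ (resp. $C$) are distinct, the reassignment creates no conflict. For a switching cycle every post of $C$ stays matched, so $M'$ is again $\A$-complete and this step is immediate. For a switching path the only delicate point is that the sink $p_{k-1}$ must be \emph{unmatched} in $M$, for otherwise assigning $a_{k-2}$ to it would double-book the post. I would prove this from the node classification: a sink is even in $G'$, so if it were matched its partner $b$ would be odd by Lemma~\ref{lem:node-class}(c); an odd vertex is reached by an odd-length alternating path whose last edge is a non-matching edge of $E'$ incident to $b$, giving some $(b,p')\in E'$ with $p'\neq p_{k-1}$. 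As $b=M(p_{k-1})$, this would put an outgoing edge $(p_{k-1},p')$ into $G_M$, contradicting that $p_{k-1}$ is a sink. Hence the sink is unmatched, $M'$ is conflict-free, and $M'$ is $\A$-complete.

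Next I would bound the prefix sums from above. The key claim is that, for the rank-maximal $M$, the set $M^{\le i}$ of its rank-$\le i$ edges is a \emph{maximum} matching of $G'_i$: a rank-$\ell$ edge with $\ell\le i$ that survives into $G'=G'_{r+1}$ is never deleted after iteration $\ell$, so it already belongs to $G'_i$; thus $M^{\le i}\subseteq G'_i$ has size $\sigma_i(M)=s_1+\dots+s_i$, which by invariants $I2$ and $I3$ equals $|M_i|$, the size of a maximum matching of $G'_i$. Since $M'\subseteq G'$ as well, $M'^{\le i}$ is \emph{some} matching of $G'_i$, and therefore $\sigma_i(M')\le\sigma_i(M)$ for every $i\le r$.

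Finally I would use the zero-weight hypothesis to upgrade these inequalities to equalities. The switch changes only the post of each $a_t$, from $p_t$ to $p_{t+1}$, so the total rank $\sum_{a}rank(a,N(a))$ changes by exactly $w(T)$ (resp. $w(C)$), which is $0$; hence $M$ and $M'$ have equal total rank. Rewriting the total rank by the layer-cake identity
\[
\sum_{a}rank(a,N(a))=(r+1)|\A|-\sum_{i=1}^{r}\sigma_i(N),
\]
equality of total ranks gives $\sum_{i=1}^{r}\sigma_i(M')=\sum_{i=1}^{r}\sigma_i(M)$. Together with the pointwise bounds $\sigma_i(M')\le\sigma_i(M)$ this forces $\sigma_i(M')=\sigma_i(M)$ for all $i$, so $M'$ shares the signature of the rank-maximal $M$ and is therefore rank-maximal. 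I expect the main obstacle to be exactly this middle link: the condition $w(T)=0$ only equates a single linear functional (the total rank), and it is \emph{not} true in general that a zero-weight switch preserves the signature, since a reduced graph $G'$ can contain maximum matchings that are not rank-maximal. The crux is the observation that $G'$ caps every prefix sum ($\sigma_i(M')\le\sigma_i(M)$), so that one linear equality plus $r$ one-sided inequalities suffice to pin all the $\sigma_i$. The unmatched-sink claim is the other place that needs the careful even/odd argument rather than a routine check.
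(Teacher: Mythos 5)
Your proposal is correct, and it reaches the conclusion by a genuinely different organization of the same underlying ingredients. Both your proof and the paper's rest on two pillars: the zero-weight hypothesis, and invariants $(I2)$/$(I3)$, i.e.\ that the rank-$\le i$ portion of a rank-maximal matching is a maximum matching of $G'_i$ for every $i$. The paper uses them \emph{by contradiction}: restricting to the applicants $\A_T$, it assumes $\rho_T(M)\succ\rho_T(M')$, and from the two linear relations $w(T)=0$ and $\sum_i(x_i-y_i)=0$ it extracts, via an algebraic claim about the suffix sums $T_k=\sum_{i\ge k}(x_i-y_i)$, a single index $\ell$ with $|M\cap G'_\ell|<|M'\cap G'_\ell|$, contradicting $(I2)$; the reverse ordering is excluded by rank-maximality of $M$. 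You run the comparison \emph{forwards}: $(I2)$/$(I3)$ caps every prefix sum, $\sigma_i(M')\le\sigma_i(M)$, and the layer-cake identity converts $w(T)=0$ into the conservation law $\sum_{i=1}^{r}\sigma_i(M')=\sum_{i=1}^{r}\sigma_i(M)$, forcing all caps to be tight at once -- no case split, no contradiction. The two arguments are linearly equivalent (the paper's $T_{k+1}$ is exactly $-\bigl(\sigma_k(M)-\sigma_k(M')\bigr)$ once $T_1=0$ is used), but your version is cleaner and makes transparent why one scalar equation plus $r$ one-sided bounds suffice. Your cautionary remark is also accurate: the reduced graph $G'$ can contain maximum matchings that are \emph{not} rank-maximal -- a post in $\EE_{r+1}$ that was odd or unreachable at some earlier stage can have a nonzero-weight path to a sink, and switching along it produces such a matching -- which is exactly why Lemma~\ref{lem:switch-path}(3) is an equivalence rather than a triviality, and why cardinality alone cannot replace the prefix bounds. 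Finally, your proof that a sink must be unmatched in $M$ (a matched even post would, via the odd-length alternating path to its partner guaranteed by Lemma~\ref{lem:node-class}, acquire an outgoing edge in $G_M$) is a genuine addition: the paper's definition of the switching operation tacitly assumes this, and its appendix proof of Lemma~\ref{lem:switch-path} simply asserts that the sink $q$ is unmatched. That check is needed for $M\cdot T$ to be a matching at all, so including it strengthens rather than pads the argument.
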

\begin{proof}
We prove the lemma for a switching path $T$. A similar argument follows for a 
switching cycle.
To show that $M'$ is rank-maximal, we show that $M$ and $M'$ have the same 
signature. 
%our goal is to show that 
%$\rho(M) = \rho(M')$. Recall that $\rho(M)$ is the signature of $M$.

Let $T= \langle p_0, p_1, \ldots, p_{k-1} \rangle$ be a switching path in $G_M$.
Let $\A_T =  \{a \mid M(a)\in T\}$. %, \mbox { for $i = 0 \ldots k-2$ }\}$.
By the definition of a {\em switch}, we know that
$|M| = |M'|$ and for each $a \notin \A_T$, we have $M'(a) = M(a)$. Thus,
it suffices to show that the signatures of $M$ and $M'$ restricted to the 
applicants in 
$\A_T$ are the same. We denote them by $\rho_{T}(M)=(x_1,x_2,\ldots,x_r)$ 
and $\rho_T(M')=(y_1,y_2,\ldots,y_r)$ respectively.
%same as the signature of $M'$ restricted to applicants in $\A_T$
%(denoted by $\rho_{\A_T}(M')$).  
%From
%the assignment of weights to edges in $G_M$, observe that 
Note that an edge of rank~$i$ in $M$ contributes $-i$ to the weight 
of $T$, whereas one in $M'$
contributes $i$. Further, since $T$ is a switching path, $w(T) = 0$. Thus, %the weight of $T$ is given by 
\begin{equation}
w(T) = (y_1-x_1)+2(y_2-x_2)+\ldots+r(y_r-x_r)=0 \label{eqn1tmp1}
\end{equation}
Since we consider only applicants in $\A_T$, we know that,
$\sum_{i=1}^r x_{i}  =  \sum_{i=1}^r y_i$, i.e.,
\begin{equation}
 \sum_{i=1}^r (x_i  -  y_i)  =  0\label{eqn2tmp2}
\end{equation}
%Define $x_\ell=c_\ell-d_\ell, 1\leq \ell\leq r$. Then equations 
%\ref{eqn1tmp1} and~\ref{eqn2tmp2} can be rewritten as
%\begin{eqnarray}
%\sum_{\ell=1}^r \ell*x_\ell & = & 0 \label{eqn:weighted}\\
%\sum_{\ell=1}^r x_\ell & = & 0 \label{eqn:unweighted}
%\end{eqnarray}
% Assume for the sake
%of contradiction that
%Let $M'$ be not rank-maximal in $G$. Then
For contradiction, assume that $\rho_{T}(M) \succ \rho_{T}(M')$. That is, 
there exists an index $j$ such that $x_j>y_j$ and, for $1 \le i < j$,
we have $x_i = y_i$.
Then, for 
Eqn.~\ref{eqn2tmp2} to be satisfied, there exists an index $\ell>j$ 
such that 
$x_{\ell} < y_{\ell}$.
In fact we will show the following stronger claim:

\begin{claim}\label{clm:clm1}
There exists an index $\ell > j$ such that $\sum_{i=1}^{\ell} (x_i - y_i)  < 0$.
\end{claim}

Before proving the claim, we show how it suffices to complete the 
proof of the lemma. 
Assuming the claim, consider the reduced graph $G'_{\ell}$ constructed in
the $\ell$th iteration of Irving~et~al.'s algorithm.
%~\ref{algo:rmm-matching} and
%let $G'_{\ell}$ be the reduced graph at end of this iteration. 

As $\sum_{i=1}^{\ell}(x_i - y_i)
<0$, we have $\sum_{i=1}^{\ell} x_{i}< \sum_{i=1}^{\ell} y_{i}$. Thus 
$|M\cap G'_{\ell}|<|M'\cap G'_{\ell}|$. However, by Invariant $(I2)$ (ref. 
Section~\ref{sec:prelim}), this contradicts the fact that every 
rank-maximal
matching restricted to any rank $\ell$ is also a maximum matching in the reduced
graph $G'_{\ell}$. This completes the proof of the lemma. We prove the claim 
below. 

%\end{proof}
%\begin{proof}[of claim]
{\it Proof (of claim)}:
Assume the contrary, i.e. $\sum_{i=1}^{k}(x_i - y_i)\geq 0$ for all $k$. Note
that this is trivially true for $k\leq j$, by our choice of $j$. 
Equivalently, $\sum_{i = k+1}^r (x_i - y_i)\leq 0$ for all $k$. Define
$T_k=\sum_{i=k}^r(x_i-y_i)$ for $1\leq k\leq r$. Thus, to prove the claim,
it suffices to show that there exists an index $\ell$ such that $T_\ell>0$.
%Define  $T_k=\sum_{i=\ell}^r (k_i - k'_i)$ for
%$1\leq i\leq r$.
Now consider Eqn.~\ref{eqn1tmp1}. It can be rewritten as follows:
%%\begin{eqnarray*}
%%x_1+2x_2+\ldots+nx_n 
%%& = & (x_1+x_2+\ldots+x_n)\\
%%& + & (x_2+x_3+\ldots+x_n)\\
%%& + & (x_3+\ldots+x_n)\\
%%& & \vdots\\
%%& + & (x_{n-1}+x_n)\\
%%& + & (x_n) \\
%%& = & 0
%%\end{eqnarray*}
\begin{equation}\label{eqn:split}
(x_1 - y_1)+2(x_2 - y_2)+\ldots+r(x_r - y_r)= T_1+T_2+\ldots+T_r=0
\end{equation}
We know that $T_1=0$, because it is the left-side of Eqn.~\ref{eqn2tmp2}.
%We claim that $T_r<0$, without loss of generality.
%This is because, 
Now, consider the term $T_r=x_r-y_r$. If $T_r=0$, we can eliminate $x_r$ and 
$y_r$ and get equations in $r-1$ variables. 
If $T_r > 0$, then 
Eqn.~\ref{eqn2tmp2} implies that the claim holds for $k=r-1$. So, without loss
of generality, we can assume $T_r<0$.
But then, to satisfy Eqn.~\ref{eqn:split}, there exists an index $i$, 
$1<i<r$, such that $T_i>0$. This implies that the claim holds for $\ell=i-1$.
This completes the proof of the claim.
\qed
\end{proof}
Now we address the question of recognition of switching
paths and switching cycles in $G_M$.
In Lemma~\ref{lem:rmm-cycle}, we show that every cycle in $G_M$ is in fact a 
switching cycle, that is, a zero-weight cycle. In 
Lemma~\ref{lem:switch-path}, we characterize switching paths. 
%MN: June-17-2014. Prajakta I am not sure if we need to recall G' as we have done below.
%Recall that $G'$
%is the reduced graph computed at the end of Irving~et~al.'s algorithm.
\begin{lemma}
\label{lem:rmm-cycle}
Let $M$ be a \rmm in $G$, and $C$ be a cycle in $G_M$. Then $w(C) = 0$.
\end{lemma}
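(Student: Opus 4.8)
The plan is to prove that any cycle $C$ in $G_M$ must have weight zero by a contradiction argument that exploits the rank-maximality of $M$, mirroring the structure of the proof of Lemma~\ref{lem:0wt-path}. First I would observe that applying a cycle $C = \langle p_0, \ldots, p_{k-1}, p_0 \rangle$ to $M$ always produces a valid matching $M \cdot C$ of the same cardinality: each applicant $a_i$ with $M(a_i) = p_i$ is reassigned to $p_{i+1}$ (indices mod $k$), every post in $C$ remains matched, and the edges used are all present in the reduced graph $G'$ by the definition of $E_M$. Crucially, unlike a path, a cycle needs no sink vertex, so $M \cdot C$ is well-defined as a matching for \emph{any} cycle regardless of its weight.

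Next I would set up the signature bookkeeping exactly as in the previous lemma. Writing $\A_C = \{a \mid M(a) \in C\}$ and letting $\rho_C(M) = (x_1,\ldots,x_r)$ and $\rho_C(M \cdot C) = (y_1,\ldots,y_r)$ be the restricted signatures, the weight of $C$ telescopes to
\begin{equation*}
w(C) = (y_1 - x_1) + 2(y_2 - x_2) + \cdots + r(y_r - x_r),
\end{equation*}
since a rank-$i$ edge leaving $M$ contributes $-i$ and one entering $M \cdot C$ contributes $+i$. Because the cycle simply permutes which post each applicant in $\A_C$ holds, we also have $\sum_{i=1}^r x_i = \sum_{i=1}^r y_i$, so $\sum_i (x_i - y_i) = 0$. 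The goal is to show $w(C) = 0$; the argument is to rule out both $w(C) > 0$ and $w(C) < 0$.

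The main idea is that $M \cdot C$ is itself a matching in $G$, and $M$ is rank-maximal, so $M \succeq M \cdot C$; but $C$ is reversible. Suppose for contradiction $w(C) \neq 0$. If $\rho_C(M) \neq \rho_C(M \cdot C)$, let $j$ be the first index where they differ. By rank-maximality of $M$ we must have $x_j > y_j$ (otherwise $M \cdot C$ would be a strictly better matching than $M$ on ranks $1,\ldots,j$, contradicting $M$'s optimality — here I would invoke invariant $(I2)$ as in Lemma~\ref{lem:0wt-path}, since $\sum_{i=1}^{j}(x_i - y_i) > 0$ would force $|M \cap G'_j| \neq |M \cdot C \cap G'_j|$ and break maximality of one of them in $G'_j$). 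But now I would consider the \emph{reverse cycle} $C^{-1}$, whose switching operation sends $M \cdot C$ back to $M$ and whose restricted signatures are simply the roles of $x$ and $y$ swapped. Applying the identical argument to $M \cdot C$ — which, being obtained from $M$ via the equal-cardinality reassignment along a cycle within $G'$, is also a maximum matching in every $G'_i$ and hence rank-maximal by invariant $(I2)$ — forces $y_j > x_j$ at the same first differing index, a contradiction. Hence $\rho_C(M) = \rho_C(M \cdot C)$, which immediately gives $x_i = y_i$ for all $i$ and therefore $w(C) = 0$.

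The step I expect to be the main obstacle is justifying that $M \cdot C$ is genuinely rank-maximal (equivalently, maximum in each $G'_i$), so that the symmetric reversal argument applies to it and not merely to $M$. The cleanest route is to argue that every edge of $C$ lies in $G'$ and that applying $C$ preserves, iteration by iteration, the maximum-matching property in each $G'_i$: since the posts and applicants touched by $C$ stay matched and the cardinality $|M \cap G'_i|$ is unchanged for each $i$ precisely when the signatures agree up to rank $i$, I would need the first-difference argument above to simultaneously pin down that no rank-$i$ count can change. An alternative, possibly smoother, presentation avoids explicitly proving rank-maximality of $M \cdot C$: instead one directly shows that if $w(C) > 0$ then $M \cdot C \succ M$ (contradicting rank-maximality of $M$), and if $w(C) < 0$ then reversing gives $w(C^{-1}) > 0$ and $M \cdot (C^{-1}) = M \cdot C \cdots$ applied to recover a better matching — so the sign asymmetry of $w$ under reversal, combined with the single optimality of $M$, already yields the contradiction without needing $M \cdot C$'s rank-maximality as a separate fact.
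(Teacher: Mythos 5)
Your proposal has a genuine gap at exactly the step you flag as ``the main obstacle,'' and neither of your two workarounds closes it. The pivotal assertion --- that $M\cdot C$ ``is also a maximum matching in every $G'_i$ and hence rank-maximal by invariant $(I2)$'' --- is circular. Restricted to the applicants moved by $C$, $M\cdot C$ has $\sum_{l\le i}y_l$ edges in $G'_i$ while $M$ has $\sum_{l\le i}x_l$; saying that $M\cdot C$ is maximum in every $G'_i$ is precisely saying that these prefix sums agree for every $i$, i.e.\ that the two restricted signatures coincide --- which is the statement being proved. Invariant $(I2)$ applies only to matchings already known to be rank-maximal, so it gives you nothing about $M\cdot C$. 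Note also that $C^{-1}$ is a cycle in the switching graph of $M\cdot C$, not in $G_M$ (its edges require $(a_{i+1},p_i)\in E'$ with $a_{i+1}=M(p_{i+1})$, which is not given), so even forming the reversal argument presupposes what is missing. Your fallback claim ``if $w(C)>0$ then $M\cdot C\succ M$'' is false, and has the sign intuition backwards: positive weight means applicants move to \emph{worse}-ranked posts. Concretely, restricted signatures $x=(2,0)$, $y=(1,1)$ give $w(C)=+1$ while $M\succ M\cdot C$, so the rank-maximality of $M$ yields no contradiction.

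The deeper problem is that no signature-counting argument can finish the proof. One direction does follow from your setup: since $M\cap G'_i$ is a maximum matching in $G'_i$ (invariant $(I2)$) and every rank-$\le i$ edge of $M\cdot C$ lies in $G'_i$, one gets $D_i:=\sum_{l\le i}(x_l-y_l)\ge 0$ for all $i$, and summation by parts turns $w(C)=\sum_l l\,(y_l-x_l)$ into $\sum_{l=1}^{r-1}D_l\ge 0$, so every cycle has \emph{nonnegative} weight. But $w(C)\le 0$ is not a consequence of any counting constraint: the pattern $x=(2,0)$, $y=(1,1)$ satisfies $D_1=1\ge 0$, $D_2=0$ and still has $w=1$. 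What actually forbids such cycles is the structure of the reduced graph $G'$: realizing this pattern as a $2$-cycle forces either a rank-$2$ edge incident on a vertex of $\odd_1\cup\U_1$, or (if its endpoints are in $\EE_1$, so that their partners are in $\odd_1$) a rank-$1$ $\odd_1\odd_1$ edge --- and Irving et al.'s algorithm deletes both kinds of edges, so the cycle cannot exist in $G_M$. This is precisely what the paper's proof does, and it is structural rather than arithmetic: it maps $C$ to the alternating cycle $C'$ in $G'$, partitions the edges of $C'$ by rank, and shows by induction on the rank $i$, using the even/odd/unreachable classification of $G'_i$, that every rank class contains equally many matched and unmatched edges --- a rank-by-rank parity that is strictly stronger than $w(C)=0$. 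To repair your proof you would have to import that structural argument, at which point you have reproduced the paper's proof; the counting framework of Lemma~\ref{lem:0wt-path} alone cannot yield Lemma~\ref{lem:rmm-cycle}.
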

\begin{proof}(Sketch) 
Let $C'$ be the alternating
cycle in $G'$, corresponding to the cycle $C$ in $G_M$.
To prove the Lemma, it suffices to show that, $C'$ has an 
equal number 
of matched and unmatched edges of any rank $i$, and hence $w(C)=0$. 
We prove this by induction on $i$.
See Appendix~\ref{sec:app-switch} 
for details.
\qed
\end{proof}

\begin{lemma}\label{lem:switch-path}
Let $M$ be a \rmm in $G$, and $G_M$ be the switching graph
with respect to $M$. Recall that $\even_i$ is the set of even vertices in the 
graph $G'_i$ constructed in the $i$th iteration of Irving~et~al.'s algorithm.
%as in Definition~\ref{def:switch-graph}, 
The following properties hold :
\begin{enumerate}
\item Let $p$ be an unmatched post in $M$. Then $p \in \even_1 \cap  
\ldots \cap \even_{r+1}$ and therefore is a sink in $G_M$.
%\begin{proof}
%The proof follows by observing that every rank-maximal matching keeps vertices in $\odd_i \cup \un_i$ matched
%for every $i = 1 \ldots r+1$. Thus if $p$ is unmatched in a rank-maximal matching $M$, then $p \in \even_1 \cup \ldots \cup \even_{r+1}$.
%\end{proof}
%\begin{lemma}\label{lem:comp}
\item A post $p$ belongs to a \Ecomp iff $p \in \even_{r+1}$. A post $p$ 
belongs to a \Ucomp iff $p \in \un_{r+1}$.
\item Let $T$ be a path from a post $p$ to some sink $q$ in $G_M$. Then $w(T)=0$
iff $p\in \EE_1\cap\ldots\cap\EE_{r+1}$.
%\end{lemma}
%
%Lemma~\ref{lem:0wtpath} and Lemma~\ref{lem:rmm-path} characterize switching
%paths. Lemma~\ref{lem:rmm-path} gives an easy way to identify those vertices
%in a sink component, from which a switching path originates. Its proof 
%is similar to that of Lemma 
%\ref{lem:rmm-cycle} and we omit it. Proof of Lemma~\ref{lem:0wtpath} can be 
%found in appendix.
%
%\begin{lemma}\label{lem:0wtpath}
%\item Let $p$ be a post such that $p$ has a zero-weight path $T$ to some sink 
%$q$ in $G_M$. Then, $p \in \EE_1 \cap \EE_2 \cap \ldots \cap \EE_{r+1}$.
%\end{lemma}
%\begin{lemma}
%\label{lem:rmm-path}
%\item Let $p \in \EE_1 \cap \EE_2 \cap \ldots \cap \EE_{r+1}$. Then $p$ has a 
%path to some sink in $G_M$. Moreover, every such path to a sink has weight zero.
\end{enumerate}
\end{lemma}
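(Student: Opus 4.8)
The plan is to prove the three parts in order, using the reduced graph $G'$ together with the even/odd/unreachable classification of Lemma~\ref{lem:node-class}, and the fact (invariant $I2$) that for any rank-maximal matching $M$ and every $\ell$, the restriction $M\cap G'_\ell$ is a maximum matching of $G'_\ell$. For Part~1, let $p$ be exposed in $M$. Since every rank-maximal matching is $\A$-complete, $p$ is a post and is also exposed in each maximum matching $M\cap G'_i$ of $G'_i$. An exposed vertex admits the length-$0$ alternating path, so it is even (Definition~\ref{def:eou}); by Lemma~\ref{lem:node-class}(a) the classes agree across all maximum matchings of $G'_i$, hence $p\in\even_i$ for every $i$, i.e. $p\in\even_1\cap\cdots\cap\even_{r+1}$. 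Moreover $p$ has no partner in $M$, so no edge of $G_M$ leaves $p$; thus $p$ is a sink.

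For Part~2 I first record two structural facts. Because $M$ is $\A$-complete, the only exposed vertices of $G'$ are posts, and in a bipartite graph every alternating path between two posts has even length; hence no post is odd and no applicant is even, so every post lies in $\even_{r+1}$ or $\un_{r+1}$. Next, for an edge $(p_i,p_j)$ of $G_M$ witnessed by $a$ with $M(a)=p_i$ and $(a,p_j)\in E'$, the matched applicant $a$ is odd or unreachable, so by Lemma~\ref{lem:node-class}(b),(c) either $p_i\in\even_{r+1}$ and, since $G'$ contains no $\odd\odd$ or $\odd\U$ edge, $p_j\in\even_{r+1}$, or else $p_i,p_j\in\un_{r+1}$. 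Thus every edge of $G_M$ joins two posts of the same class, so each connected component of $G_M$ is entirely even or entirely unreachable. Since a sink is even, every unreachable component is a non-sink component. Conversely, if $p\in\even_{r+1}$ there is an even-length alternating path $q_0,a_1,q_1,\ldots,a_k,q_k=p$ in $G'$ from an exposed post $q_0$; reading it backwards, $M(a_i)=q_i$ and $(a_i,q_{i-1})\in E'$ give directed edges $q_k\to q_{k-1}\to\cdots\to q_0$ of $G_M$, a directed path from $p$ to the exposed post $q_0$, which is a sink by Part~1. Hence the component of $p$ contains a sink. Together with monochromaticity of components, this gives: $p$ is in a sink component iff $p\in\even_{r+1}$, and in a non-sink component iff $p\in\un_{r+1}$.

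For Part~3, the forward direction is clean: if $w(T)=0$ then $T$ is a switching path, so by Lemma~\ref{lem:0wt-path} $M'=M\cdot T$ is rank-maximal; the operation frees $p$, so $p$ is exposed in $M'$, and applying Part~1 to the rank-maximal matching $M'$ (the sets $\even_i$ are matching-independent by invariant $I4$) gives $p\in\even_1\cap\cdots\cap\even_{r+1}$. For the converse, associate to $T=\langle p_0,\ldots,p_{k-1}\rangle$ the $M$-alternating path $P=p_0,a_0,p_1,\ldots,a_{k-2},p_{k-1}$ in $G'$ with $(a_i,p_i)\in M$ and $(a_i,p_{i+1})\in E'$. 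Writing $m_j,u_j$ for the numbers of matched and unmatched rank-$j$ edges of $P$, one has $\sum_j m_j=\sum_j u_j=k-1$ and $w(T)=\sum_j j\,(u_j-m_j)$, so it suffices to prove $m_j=u_j$ for every $j$. I would establish this by induction on $j$, paralleling the balance argument of Lemma~\ref{lem:rmm-cycle}: at level $j$ one intersects $P$ with $G'_j$ and uses that $p_0\in\even_j$, together with Lemma~\ref{lem:node-class}, to force the rank-$j$ matched and unmatched edges along $P$ to cancel.

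The main obstacle is exactly this last induction. Unlike Lemma~\ref{lem:rmm-cycle}, where a closed cycle makes the two endpoint contributions cancel automatically, here $P$ is an open path, so I must control its two ends: the terminal sink $q$ must be exposed. I will first observe that a matched sink is necessarily a degree-one last-resort post and therefore has no incoming $G_M$-edge, so any nontrivial path to a sink ends at an exposed post and $M\cdot T$ is a genuine matching with $|M\cdot T|=|M|$; then the initial post $p_0$ supplies the required boundary balance precisely because $p_0\in\even_j$ at every level $j$. Lining up the base case and the effect of rank-$>j$ edges (which leave $G'_j$ and can split $P$ into several subpaths) with the evenness hypothesis is the delicate point; everything else reduces to bookkeeping with Lemma~\ref{lem:node-class} and invariants $I1$--$I4$.
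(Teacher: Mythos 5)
Your proposal is correct and follows essentially the same route as the paper's proof: Part~1 via the fact that odd/unreachable vertices stay matched in every rank-maximal matching, Part~2 from the same classification facts (your edge-monochromaticity claim is the contrapositive packaging of the paper's boundary-edge contradiction, with the identical two-case analysis on edge direction), and Part~3 forward via Lemma~\ref{lem:0wt-path} plus matching-independence of the $\even_i$, with the converse deferred---exactly as in the paper---to the rank-by-rank balance induction of Lemma~\ref{lem:rmm-cycle} adapted to an open path. The only wrinkle is your unproved aside that a ``matched sink'' is a degree-one last-resort post: in fact matched sinks cannot exist at all (a matched post $p\in\even_{r+1}$ has an odd partner, which has an unmatched edge in $G'$ to another even post, giving $p$ an outgoing $G_M$-edge), which yields the endpoint-exposedness you need more directly and without that detour.
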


\REM{
\subsection{Characterization of Paths in a Switching Graph}
We consider paths in a switching graph which end in an unmatched post.
We characterize such paths which give another rank-maximal matching on
switching along them. We refer to them as {\em switching paths} and show that 
such paths have zero weight.
%However, a switching graph does not have any information about unmatched 
%applicants. Hence  
%for characterizing switching paths, we need a modification to the given 
%instance, so that every applicant is matched.
%
%\begin{lemma}\label{lem:a-complete}
%A rank-maximal matching $M$ in a given instance can be uniquely extended to an
%$\A$-complete matching by addition of dummy posts.\footnote{This is required in
%the fpras part, but is easier to prove here.}
%\end{lemma}
%Let $r$ denote the maximum rank of any applicant in the given instance 
%$G = (\A \cup \p, E)$. Now,
%for each agent $a \in \A$, add a unique dummy post, call it $p_a$ ranked $r+1$. 
%Now run Irvings algorithm
%on this modified instance for $r+1$ iterations. Let $M$ denote the matching thus obtained. Note that
%$M$ is $\A$-complete. Now let us 

Let us partition vertices at every iteration 
$1 \le i \le r+1$ of the Irving's algorithm. In iteration
$i$, we denote the partition of vertices by $\EE_i, \odd_i, \U_i$. 
Since $M$ is $A$-complete, as described in Section~\ref{sec:prelim},
it is clear that $\EE_{r+1} \cap A = \emptyset$. This implies that 
$\odd_{r+1} \cap P = \emptyset$. 
%Finally delete edges of the form $\odd_{r+1} \U_{r+1}$. Call this the reduced graph $G'$.

Now consider the switching graph $G_M$ w.r.t. a rank-maximal matching $M$. 
Let $\mathcal{X}$ be any maximal weakly connected component
of $G_M$. Call $\X$ an  \Ecomp if it contains posts belonging to $\EE_{r+1}$.
Call $\X$ an \Ucomp  if it contains no posts belonging to $\EE_{r+1}$.

\begin{lemma}
Let $\X$ be an \Ecomp of $G_M$. Then $\X$ does not contain any post belonging to $\U_{r+1}$.
Let $\X$ be an \Ucomp of $G_M$. Then $\X$ does not contain any post belonging to $\EE_{r+1}$.
\end{lemma}
\begin{proof}
TODO.
\end{proof}
\begin{definition}
{\bf Sink vertex:} A vertex $p$ of $G_M$ is called a sink vertex, if $p$ is has no outgoing edge in $G_M$ and
$p \in \EE_1 \cap \EE_2 \cap \ldots \cap \EE_{r+1}$.
\end{definition}

It is clear that a switching path always ends in a sink vertex. 
In Lemma~\ref{lem:0wtpath} and Lemma~\ref{lem:rmm-path}, we show that a 
switching path starts at a post $p$ if and only if $p$ is even in all the 
iterations of Irving et al's algorithm.

\begin{lemma}\label{lem:0wtpath}
Let $p$ be a post such that $p$ has a zero weight path $T$ to some sink in $G_M$. Then, $p \in \EE_1 \cap \EE_2 \cap \ldots \cap \EE_{r+1}$.
\end{lemma}
\begin{proof}
%(This is Prajakta's proof from email.)

Let $M' = M \cdot T$ be the matching obtained by switching along the path $T$.
Since $T$ has zero weight, from Lemma~\ref{lem:0wt}, we know 
that the matching $M'$ is a rank-maximal matching in $G$. Thus we have obtained
 a rank-maximal matching $M'$ in $G$ which leaves $p$ unmatched.

Note that since $p$ has a path to a sink, $p \in \EE_{r+1}$.
Now, assume for contradiction that $p \in \odd_i \cup \U_i$ for some $i$ such
that $1 \le i \le r$. However, by the
property of Irving et al's algorithm, we know that every vertex belonging to 
$\odd_i \cup \U_i$ remains matched
in every rank-maximal matching of $G$. However, we have already obtained a 
matching, namely $M'$, which leaves $p$ unmatched.
This contradicts the assumption that $p\in \odd_i\cup\U_i$ for some $i$,
and completes the proof of the lemma.
\end{proof}

\begin{lemma}
\label{lem:rmm-path}
Let $p \in \EE_1 \cap \EE_2 \cap \ldots \cap \EE_{r+1}$. Then $p$ has a path 
to some sink in $G_M$. Moreover, every such path has weight zero.
\end{lemma}
\begin{proof}
(This proof is exactly similar to the proof of Lemma~\ref{lem:rmm-cycle}. I just copied it here for completeness and to ensure that
I haven't missed anything.)

If $p$ is a sink vertex, we are done, since the trivial empty path is the zero weight path to itself.
Otherwise let $p$ have at least one outgoing edge in $G_M$. Note that since $p \in \EE_{r+1}$, it must have
a path to some sink in $G_M$. Let $T_M = \langle p = p_1, p_2, \ldots, p_k \rangle$ denote 
the path from $p = p_1$ to some sink $p_k$  in $G_M$. Let $T = \langle p_1, a_1, p_2, a_2, \ldots, p_{k-1}, a_{k-1}, p_k \rangle$ denote
the corresponding 
even length alternating
path in $G'$. Note that $p_k$ is unmatched in $M$ 
by the definition of sink vertex. Our goal is to show that for any rank~$i$, we have equal number of matched
and unmatched edges in $T$. This will imply that the path $T_M$ has weight zero and hence switching along it gives another RMM.

Let us partition the edges of $T$ as $X_1 \cup X_2 \cup \ldots \cup X_{r+1}$ where
$X_i$ denotes edges of rank $i$ belonging to $T$. Note that for some $i$, we may have $X_i$ as empty.
Now for any $i$, consider the $Y_i = \cup_{j=1}^{i} X_j$. We show by induction on $i$,
that for each $i$, any component of $Y_i$ is either an even length sub-path  of $T$ or the entire path $T$ itself.

{\bf Base case:} Let $s$ denote the first index for which $X_s$ is non-empty. Since $Y_{s-1}$ is empty,
it trivially satisfies the induction hypothesis. Consider $Y_s = \cup_{j=1}^{s} X_j$. If $Y_s = T$,
we are done, since it is an even length alternating path, it has equal number of unmatched and matched edges and
all of them are of rank $s$. On the other hand if $Y_s \neq T$, then assume
for the sake of contradiction that $Y_s$ contains an odd length path $b_1, q_1, b_2, q_2, \ldots, b_l, q_l$.
If $q_l$ is either the start or end of our path $T$, that is if $q_l = p_1$ or $q_l = p_k$, 
we know that $q_l \in \EE_1 \cap \EE_2 \cap \ldots \cap \EE_{r+1}$. In particular, $q_l \in \EE_s$.
Since all the other edges in $T$ are of rank greater than $s$, and they are incident on $a_1$ and $p_k$,
On the other hand, if  $q_l$ is neither end-point of the path $T$, then there are edges ranked higher than $s$
incident on $q_l$. Similarly, there are edges ranked higher than $s$ incident on $b_1$.
Thus, it must be the case that after the $s$th iteration of Irving's algorithm, both $b_1$ and $q_l$ are
{\em even} in the graph $G'_s$. However, since this odd length path is present in $G'$ (the reduced graph
at the end of Irving's algorithm), this path must be present in $G'_s$ and hence if both $b_1$ and $q_l$
are {\em even}, there exists some $OO$ edge (which should have got deleted) or an $EE$ edge which cannot be
present in the graph $G'_s$. Thus in either case, we get a contradiction. Therefore, if both the end points of
the path have to be {\em even}, it should in fact be an even length path. And since it is a sub-path of an alternating cycle,
it has to be the case that the even length path has equal number of matched and unmatched rank-s edges.

{\bf Induction step:} Let the induction hypothesis be true for some $k < r$.
By induction hypothesis, for every $j <=k$, we have
that $Y_j$ is collection of even length paths and  for $1 \le j \le k$, $X_j$ has equal number of
matched and unmatched edges. Now consider $Y_{k+1}$. If $Y_{k+1} = T$, we
are again done. On the other hand, for the sake of contradiction,
let $Y_{k+1}$ contain an odd length path,
say $b_1, q_1, \ldots b_l, q_l$. We can argue that both the end points $b_1$ and $q_l$ have
to be belong to $\EE_{k+1}$ after the $(k+1)$st iteration of the Irving's algorithm.
However, as proved above, both the end points of an odd
length path cannot be {\em even}. This gives us the desired contradiction.

This proves that for every $X_i$ we have $|X_i|/2$ many matched and unmatched edges in $T$.
Hence proving that the the corresponding path in $G'_{M}$ is a zero weight path and switching along that
gives another RMM.

\end{proof}
}
The proof appears in Appendix~\ref{sec:app-switch}.
%We have shown that, in a switching graph $G_M$ constructed with respect to a 
%rank-maximal matching $M$, application of switching paths and switching cycles 
%results in a rank-maximal matching different from $M$. 
In the following theorem,
we prove that every rank-maximal matching can be obtained from $M$ by applying
suitable {\em switches}.
We include the proof in Appendix~\ref{sec:app-switch}.
\begin{theorem}\label{thm:another-rmm}
%For any switching path $T$ (or switching cycle $C$) in $G_M$, the matching $M'=M \cdot T$ ($M' = M \cdot C$ resp.) is a rank-maximal matching in $G$.
Every rank-maximal matching $M'$ in $G$ can be obtained from $M$ by applying to $M$ 
vertex-disjoint switching paths and switching cycles in $G_M$.
%to $M$ zero or more vertex-disjoint switching paths and switching cycles
%in each of the
%of sink components of $G_M$ together with zero or more vertex-disjoint switching cycles in each of the non-sink components of $G_M$.
\end{theorem}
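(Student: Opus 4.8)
The plan is to analyze the symmetric difference $M \oplus M'$ and translate its connected components one-to-one into switching structures of $G_M$. Since both $M$ and $M'$ are rank-maximal, both are $\A$-complete, and by the invariants of Irving~et~al.'s algorithm (in particular $(I1)$ and $(I2)$) together with the definition of the reduced graph $G'$, both $M$ and $M'$ are contained in $G'$. Hence $M \oplus M'$ is a vertex-disjoint union of simple alternating paths and simple alternating cycles inside $G'$. Because every applicant is matched in both $M$ and $M'$, each applicant has degree $0$ or $2$ in $M \oplus M'$ (degree $2$ exactly when $M(a) \neq M'(a)$), so no applicant can be an endpoint of an alternating path. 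Consequently every alternating path has both endpoints in $\p$, and since posts and applicants alternate along it, every such path has even length with its two end-edges lying in different matchings.

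First I would handle the cycles. An alternating cycle $q_0 - a_0 - q_1 - \cdots - a_{k-1} - q_0$ with $(q_i,a_i)\in M$ and $(a_i,q_{i+1})\in M'$ (indices mod $k$) maps to the directed cycle $\langle q_0, q_1, \ldots, q_{k-1}, q_0\rangle$ in $G_M$, because each edge $(q_i,q_{i+1})$ is present in $G_M$: we have $M(a_i)=q_i$ and $(a_i,q_{i+1})\in M'\subseteq G'$. By Lemma~\ref{lem:rmm-cycle} this cycle has weight $0$, hence is a switching cycle, and applying it to $M$ re-matches each $a_i$ from $q_i$ to its $M'$-partner $q_{i+1}$.

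Next I would handle the paths. Orienting an even-length alternating path so that its first edge lies in $M$, write it as $q_0 - a_0 - q_1 - \cdots - a_{k-1} - q_k$ with $(q_i,a_i)\in M$ and $(a_i,q_{i+1})\in M'$; this yields the directed path $T=\langle q_0, q_1, \ldots, q_k\rangle$ in $G_M$. Its last vertex $q_k$ is unmatched in $M$, so by part~$1$ of Lemma~\ref{lem:switch-path} it is a sink. Its first vertex $q_0$ is unmatched in $M'$; since $M'$ is itself rank-maximal and hence a maximum matching in every $G'_i$, and since the vertex classes $\EE_i$ are matching-independent by invariant $(I4)$, the argument of part~$1$ applied to $M'$ gives $q_0\in\EE_1\cap\cdots\cap\EE_{r+1}$. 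Then part~$3$ of Lemma~\ref{lem:switch-path} forces $w(T)=0$, so $T$ is a switching path, and applying it re-matches each $a_i$ from $q_i$ to $q_{i+1}=M'(a_i)$.

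Finally I would observe that distinct components of $M\oplus M'$ are vertex-disjoint in $G'$ and in particular use disjoint sets of posts, so the associated switching paths and cycles are vertex-disjoint in $G_M$; applying all of them simultaneously sends every applicant $a$ with $M(a)\neq M'(a)$ to its $M'$-partner while fixing every other applicant, producing exactly $M'$. The one point requiring care — and the step I expect to be the main obstacle — is justifying that the starting post $q_0$ of each path lies in $\EE_1\cap\cdots\cap\EE_{r+1}$: this relies on the fact that the even/odd/unreachable partition is common to all rank-maximal matchings, so that part~$1$ of Lemma~\ref{lem:switch-path}, although stated for $M$, applies verbatim to the $M'$-unmatched post $q_0$. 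Everything else is bookkeeping about orientations and the definition of the switching operation.
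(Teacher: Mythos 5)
Your proposal is correct and takes essentially the same route as the paper's own proof: decompose $M \oplus M'$ into alternating paths and cycles in $G'$, handle cycles via Lemma~\ref{lem:rmm-cycle}, and for each even-length path argue that the $M'$-unmatched endpoint lies in $\EE_1 \cap \ldots \cap \EE_{r+1}$ (because posts that are odd or unreachable at any rank are matched in every rank-maximal matching), so that Lemma~\ref{lem:switch-path}~(3) yields weight zero. Your degree-parity argument for why path endpoints must be posts is a slightly more explicit justification than the paper's appeal to equal cardinality, but the substance is identical.
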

\vspace{-0.2in}
\subsection{Generating all rank-maximal pairs}\label{sec:rmm-pairs}
\vspace{-0.1in}
In this section we give an efficient algorithm to compute the set of
rank-maximal pairs, defined below: 
\begin{definition}\label{def:rmm-pair}
An edge $(a, p)$ is a rank-maximal pair if
there exists a rank-maximal matching $M$ in $G$ such that $M(a) = p$.
\end{definition}
We refer to rank-maximal pairs as {\em rmm-pairs}.
We show that the set of rmm-pairs can be computed in time linear in the size 
of the switching graph 
$G_M$ constructed with respect to any rank-maximal matching $M$.
We prove the following theorem:%, the proof appears in 
%Appendix~\ref{sec:app-switch}.

\begin{theorem}\label{thm:rmm-pairs}
The set of rmm-pairs for an instance $G= (\A \cup \p, E)$  can be computed in 
$O(\min(n+r,r\sqrt{n})m)$ time.%, which is asymptotically same as the
%time required to compute a rank-maximal matching in $G$.
\vspace{-0.1in}
\end{theorem}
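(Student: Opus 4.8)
The plan is to show that the set of rmm-pairs is exactly the set of edges $(a,p)$ of $G'$ that can be realized by switching operations, and that these can be identified by a linear-time traversal of the switching graph $G_M$. First I would compute a single rank-maximal matching $M$ using Irving~et~al.'s algorithm, which runs in $O(\min(n+r,r\sqrt{n})m)$ time; this already gives the reduced graph $G'$ and the vertex classifications $\EE_i,\odd_i,\U_i$ needed to build $G_M$. Constructing $G_M$ itself is linear in the size of $G'$, since each edge $(p_i,p_j)$ of $G_M$ arises from a matched edge $(a,p_i)\in M$ together with an edge $(a,p_j)\in E'$. So the dominant cost is the initial computation of $M$, and the entire burden of the theorem is to show that after this preprocessing, the rmm-pairs can be read off in time $O(|E_M|)=O(m)$.

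Next I would characterize which edges are rmm-pairs in terms of $G_M$. By Theorem~\ref{thm:another-rmm}, every rank-maximal matching is obtained from $M$ by applying vertex-disjoint switching paths and switching cycles, and conversely by Lemma~\ref{lem:0wt-path} every such switch yields a rank-maximal matching. Hence an edge $(a,p)$ with $M(a)=p$ is already an rmm-pair, and an edge $(a,p)\in E'$ with $M(a)=p'\neq p$ is an rmm-pair if and only if the corresponding directed edge $(p',p)$ in $G_M$ lies on some switching cycle or on some switching path to a sink. The task therefore reduces to marking every edge of $G_M$ that lies on a zero-weight cycle or on a zero-weight path ending at a sink vertex. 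For cycles this is immediate: by Lemma~\ref{lem:rmm-cycle} \emph{every} cycle in $G_M$ has weight zero, so every edge lying on any directed cycle is automatically good, and such edges are exactly those in the non-trivial strongly connected components of $G_M$, computable in linear time by Tarjan's algorithm. For paths, by Lemma~\ref{lem:switch-path}(3) a path from $p$ to a sink is zero-weight iff $p\in\EE_1\cap\cdots\cap\EE_{r+1}$, and these posts are exactly the sink vertices and those in sink components that can reach a sink; the vertex labels are already available from the preprocessing.

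The key remaining point is to argue that one does not need to enumerate individual paths or cycles (of which there may be exponentially many), but only to decide, per edge, whether it participates in \emph{some} switching structure. I would establish: an edge $(p',p)\in E_M$ is on a switching cycle iff $p'$ and $p$ lie in the same strongly connected component of $G_M$; and $(p',p)$ is on a switching path to a sink iff $p'$ lies in a sink component and both $p',p\in\EE_1\cap\cdots\cap\EE_{r+1}$ with $p$ reachable to a sink, which again follows from Lemma~\ref{lem:switch-path}. Since the weight constraint is automatically satisfied by Lemmas~\ref{lem:rmm-cycle} and~\ref{lem:switch-path}, no weight arithmetic along paths is needed during the scan. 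Thus a single pass over $E_M$, after a linear-time SCC decomposition and a reachability computation to sinks, suffices to mark all rmm-pairs.

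The main obstacle I expect is the path case rather than the cycle case: I must be careful that every directed edge lying on a zero-weight path to a sink is correctly detected without examining exponentially many paths, and in particular that reachability to a sink in the directed graph $G_M$ (combined with the vertex-label condition $p'\in\EE_1\cap\cdots\cap\EE_{r+1}$) is both necessary and sufficient for the edge to be extendable to a genuine switching path. The subtlety is ruling out the possibility that an edge sits only on paths of nonzero weight to a sink; Lemma~\ref{lem:switch-path}(3) resolves exactly this, guaranteeing that once the starting post satisfies the even-in-all-iterations condition, \emph{every} path to a sink is automatically zero-weight, so plain directed reachability suffices and the total running time is dominated by the initial rank-maximal matching computation, giving the claimed $O(\min(n+r,r\sqrt{n})m)$ bound.
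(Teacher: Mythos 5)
Your high-level plan is the same as the paper's: compute one rank-maximal matching $M$ (this dominates the running time), build $G_M$, reduce the question via Theorem~\ref{thm:another-rmm} to deciding which edges of $G_M$ lie on a switching cycle or a switching path, handle cycles by a strongly-connected-component computation (correct, since every cycle has weight zero by Lemma~\ref{lem:rmm-cycle}), and do everything after the preprocessing in linear time. The cycle case and the timing are fine. The genuine gap is your path criterion: you claim that $(p',p)$ lies on a switching path if and only if $p'$ lies in a sink component, \emph{both} $p',p\in\EE_1\cap\cdots\cap\EE_{r+1}$, and $p$ can reach a sink, and you call this condition necessary. It is not. Lemma~\ref{lem:switch-path}(3) constrains only the \emph{starting} post of a switching path; intermediate posts of a switching path need not lie in $\EE_1\cap\cdots\cap\EE_{r+1}$ (they lie in $\EE_{r+1}$, but may be odd or unreachable in $G'_i$ for some $i\le r$). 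Concretely, take applicants $a_0,a_1,b$ where $a_0$ ranks $p_1$ first and $p_0$ second, $a_1$ ranks $p_1$ first and $p_2$ second, and $b$ ranks $p_1$ first and $p_b$ second. The optimal signature is $(1,2)$ and $M=\{(a_1,p_1),(a_0,p_0),(b,p_b)\}$ is rank-maximal. Then $G_M$ has edges $(p_0,p_1)$ and $(p_b,p_1)$ of weight $-1$ and $(p_1,p_2)$ of weight $+1$, with $p_2$ a sink, so $\langle p_0,p_1,p_2\rangle$ and $\langle p_b,p_1,p_2\rangle$ are switching paths; hence $(a_0,p_1)$, $(b,p_1)$ and $(a_1,p_2)$ are all rmm-pairs (for instance $\{(a_0,p_1),(a_1,p_2),(b,p_b)\}$ is rank-maximal). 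But $p_1$ is odd in $G'_1$ (the rank-$1$ graph is a star centered at $p_1$), so $p_1\notin\EE_1$, and your test rejects every one of these edges: $(p_1,p_2)$ has its tail outside the intersection, while $(p_0,p_1)$ and $(p_b,p_1)$ have their head outside it. Your algorithm would report only the matched edges as rmm-pairs. (Your intermediate claim that the posts of $\EE_1\cap\cdots\cap\EE_{r+1}$ are ``exactly the sink vertices and those in sink components that can reach a sink'' fails on the same example.)

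The correct criterion---and the one the paper uses---is reachability \emph{from} the intersection, not membership of the edge's own endpoints in it: an edge $(p',p)$ of $G_M$ lies on a switching path or cycle if and only if $p'$ is reachable in $G_M$ from some post $p_0\in\EE_1\cap\cdots\cap\EE_{r+1}$. Sufficiency holds because every post in a sink component can reach a sink (proof of Lemma~\ref{lem:switch-path}(2)), so there is a walk from $p_0$ through $(p',p)$ to a sink; either the edge already lies in a nontrivial strongly connected component (and is then on a switching cycle), or, passing to the condensation DAG, one extracts a simple path from $p_0$ through $(p',p)$ to a sink, which has weight zero by Lemma~\ref{lem:switch-path}(3). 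Necessity holds because the start of any switching path lies in the intersection, again by Lemma~\ref{lem:switch-path}(3). Algorithmically this is exactly the paper's step for the path case: run a DFS from each post in $\EE_1\cap\cdots\cap\EE_{r+1}$ and mark \emph{all} edges encountered (not just tree edges), which stays within the linear-time budget and yields the claimed $O(\min(n+r,r\sqrt{n})m)$ bound.
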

\begin{proof}
We note that, by Theorem~\ref{thm:another-rmm}, an edge $(a,p)$ is a rmm-pair
{\em iff}
(i) $(a,p)\in M$ or,
 (ii) the edge $(M(a),p)$ belongs to a switching cycle in $G_M$ or,
(iii) the edge $(M(a),p)$ belongs to a switching path in $G_M$.

Condition~(i) can be checked by computing a rank-maximal matching $M$ which takes $O(\min(n+r,r\sqrt{n})m)$ time.
Condition~(ii) can be checked by computing
strongly connected components of $G_M$, which takes time linear in the size of
$G_M$.
%by Lemma~\ref{lem:rmm-cycle}, every cycle in $G_M$ has zero weight and hence
%is a switching cycle. 

To check Condition~(iii), note that a post $p$ has
a zero-weight path to a sink if and only if $p\in \EE_1\cap\ldots\cap
\EE_{r+1}$ by Lemma~\ref{lem:switch-path}~(3).
Moreover, all the paths from such a post $p$ to a sink have weight zero.
Therefore, performing a DFS from each $p\in\EE_1\cap\ldots\cap\EE_{r+1}$
and marking all the edges encountered in the DFS (not just the tree edges) gives all the pairs which
satisfy Condition~(iii).
%The time complexity follows from observing that the time taken by Condition~(i) dominates the
%rest of the steps and the time taken in Condition~(i) follows from the algorithm of Irving~et~al.\cite{IKMMP06}.
\qed
\end{proof}

%\section{Hardness and FPRAS for Counting Rank-Maximal Matchings}
\section{Counting Rank-Maximal Matchings}\label{sec:count}
\vspace{-0.1in}
We prove that the problem of counting the number of 
rank-maximal matchings in an instance is \shP-complete,
and give an \fpras\ for the same.
%Our proof is a 
%%reduction
%from the problem of counting the number of perfect matchings in 3-regular 
%bipartite graphs, the latter
%was shown to be $\#$P-Complete by Dagum and Luby in ~\cite{DL92}.

%We then give a polynomial-time reduction from the problem of counting the 
%number of rank-maximal matchings 
%to counting the number of perfect matchings in a bipartite graph.
%Since the later has a fully polynomial-time randomized approximation scheme 
%(FPRAS)  
%for counting the number of perfect matchings in a bipartite graph 
%\cite{JerrumSV04}, the reduction implies 

%Then we reduce the problem of counting the number of rank-maximal matchings to
%counting the number of perfect matchings in a bipartite graph in polynomial
%time. Since there is an FPRAS for counting the number of perfect matchings in a bipartite graph \cite{}, the reduction implies an FPRAS for counting the number of rank-maximal matchings.
\vspace{-0.1in}
\subsection{Hardness of Counting}\label{sec:shp-hard}
We prove \shP-hardness by reducing the problem of counting the number of 
matchings in $3$-regular bipartite graphs to counting the number of rank-maximal
matchings. The former was shown to be \shP-complete by Dagum and Luby 
\cite{DL92}.

%The reduction is trivial for the case when the rank-maximal matching instance 
%may
\noindent{\bf Reduction for lists with ties: }
First let us consider the case when preference lists may contain 
ties\footnote{ Recall that preference lists are said to contain ties if an 
applicant ranks two or more posts at the same rank.}. 
%In this case, there is the following trivial reduction.
Let $H = (X \cup Y,E )$ be a $3$-regular bipartite graph. We construct an 
instance $G$ of the rank-maximal
matchings problem by setting $G = H$ and assigning rank~$1$ to all the edges in 
$E$. It is well-known that a $k$-regular bipartite
graph admits a perfect matching for any $k$. It is easy to see that every 
perfect matching in $H$ is a rank-maximal 
matching in $G$ and vice versa. %This proves that counting the number of rank-maximal matchings in case of ties is \shP-complete.
This proves the \shP-hardness of the problem for the case of ties.
 
\noindent{\bf Reduction for strict lists: }
%We now modify the above reduction for the case of strict preference lists. 
Let $H = (X \cup Y, E)$ be a $3$-regular bipartite graph, with $|X| = |Y| = n$. 
The corresponding instance $G = (\A \cup \p, E_G)$ of the rank-maximal 
matchings problem is as follows: 
%For every $x \in X$,
%there is an agent $a_x \in \A$ and, for every $y \in Y$, we have a post 
%$p_y \in \p$. Additionally,
%we have $n-3$ dummy agents $ad_1, ad_2, \ldots, ad_{n-3}$ and $n-3$ dummy posts
%$pd_1, pd_2, \ldots, pd_{n-3}$. Thus,
\begin{displaymath}
\A = \{ a_x: x \in X \} \cup \{ ad_1, ad_2, \ldots, ad_{n-3} \};
\p = \{ p_y: y \in Y \} \cup \{ pd_1, pd_2, \ldots, pd_{n-3} \} 
\end{displaymath}
Here $ad_i,pd_i$, $1\leq i\leq n-3$ are dummy agents and 
dummy posts respectively.

To construct the preference lists of agents in $\A$,
we fix an arbitrary ordering on the vertices in $Y$ i.e. 
$order: Y \rightarrow \{1, \ldots, n \}$. This assigns an ordering on the posts in $\p$.
The preference lists of the agents can be described as below:
\begin{itemize}
\item A dummy agent $ad_i$ has a preference list 
of length one, with dummy post $pd_i$ as his rank~$1$ post. 
\item %For every agent $a_x \in \A$, 
%recall
%that the vertex $x \in X$ has exactly 3 edges in $H$. Further these neighbors in $H$ are
%ordered. 
%We now create a preference list
%of length $n$ for agent $a_x$.
The preference list of an agent $a_x$ consists of posts $p_{y_1}, p_{y_2}, 
p_{y_3}$ ranked at $order(y_1), order(y_2)$, and 
$order(y_3)$
respectively, where % in the preference list of $a_x$.  Note that this leaves gaps at exactly $n-3$ ranks
$y_1, y_2, y_3$ denote the 3 neighbors of $x$ in $H$. 
%amongst $1, \ldots, n$ in the preference list of $a_x$, which 
The remaining places in the preference list of $a_x$ are filled using the $n-3$ 
dummy posts. 
\end{itemize}
Following Lemma (proof in Appendix~\ref{sec:app-count}) shows the correctness of the reduction.
\begin{lemma}\label{lem:pm-vs-rmm}
Let $H$ be a 3-regular bipartite graph and let $G$ be the rank-maximal
matchings instance constructed from $H$ as above. There is a one-to-one correspondence between perfect matchings in $H$ and rank-maximal matchings in $G$.%\footnote{The earlier version called both matchings as $M$, which is syntactically incorrect.} 
%A matching $M$ is a perfect matching
%in $H$ iff it is a rank-maximal matching in $G$.
\end{lemma}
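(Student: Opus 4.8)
The plan is to exhibit an explicit bijection between perfect matchings of $H$ and rank-maximal matchings of $G$, and to do this I must first pin down the signature of every rank-maximal matching in $G$. I would begin by analyzing the structure forced by the construction. The dummy agents $ad_i$ each have a preference list of length one, so in any matching they can only be matched to $pd_i$; since a rank-maximal matching maximizes the number of rank-$1$ edges, every $ad_i$ must be matched to its unique post $pd_i$ at rank~$1$. This immediately removes the $n-3$ dummy posts $pd_i$ from the pool available to the real agents $a_x$, because each $pd_i$ is already taken. Consequently, in any rank-maximal matching, the real agents $a_x$ must be matched among the real posts $p_y$, using the edges coming from $H$.

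Next I would argue that each real agent $a_x$ is matched to one of its three $H$-neighbors $p_{y_1},p_{y_2},p_{y_3}$ in any rank-maximal matching, and never to a dummy post. The key observation is that $H$, being $3$-regular bipartite, admits a perfect matching (as noted in the reduction), so there exists a matching of $G$ in which all $n$ real agents are matched to real posts via $H$-edges and all $n-3$ dummy agents are matched at rank~$1$. I would compute the signature of such a matching and show it is strictly larger (in the $\succ$ ordering) than any matching in which some $a_x$ is diverted to a dummy post: diverting $a_x$ to a dummy post $pd_i$ would force $ad_i$ off its unique rank-$1$ post, strictly decreasing $x_1$, which dominates the signature. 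Hence rank-maximality forces every $a_x$ onto an $H$-edge, and the restriction of any rank-maximal matching of $G$ to the real vertices is exactly a perfect matching of $H$ (it matches all $n$ agents $a_x$ to distinct real posts, which is an $X$-perfect and hence perfect matching of the $3$-regular graph).

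With this in hand, the bijection is clear: given a perfect matching $N$ of $H$, map it to $M_N := N \cup \{(ad_i,pd_i) : 1\le i\le n-3\}$; conversely, given a rank-maximal matching $M$ of $G$, its restriction to $\{a_x\}\times\{p_y\}$ is a perfect matching of $H$. I would verify that $M_N$ is indeed rank-maximal by checking that all such matchings share the same signature (the dummy agents contribute a fixed block of $n-3$ rank-$1$ edges, and the ranks used by the real agents on their $H$-edges are determined, up to permutation, by the fixed ordering $order(\cdot)$, so the multiset of ranks achieved is identical across all perfect matchings of $H$), and that the two maps are mutually inverse.

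The main obstacle I anticipate is the middle step: rigorously ruling out that a rank-maximal matching ever routes a real agent $a_x$ to a dummy post or leaves some agent unmatched while gaining elsewhere. The subtlety is that the ranks $order(y_1),order(y_2),order(y_3)$ assigned to $a_x$'s real posts can be large and vary across agents, so one cannot argue purely at rank~$1$; I must use the dominance of earlier coordinates in the signature and the fact that the dummy agents' rank-$1$ matches are sacrificed whenever a dummy post is used by a real agent. Making the signature comparison airtight — showing the perfect-matching-based signature strictly dominates every alternative — is where the careful bookkeeping lies, and it is where I would spend most of the argument.
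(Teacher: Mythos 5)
Your overall route is the same as the paper's: extend a perfect matching $N$ of $H$ to $N\cup\{(ad_i,pd_i): 1\le i\le n-3\}$, show that $(n-2,1,1,\ldots,1)$ is the best attainable signature, conclude that every rank-maximal matching of $G$ contains all edges $(ad_i,pd_i)$ and matches the real agents among the real posts, and read off the bijection. However, the justification you give for the central step is incorrect. You assert that diverting a real agent $a_x$ to a dummy post $pd_i$ ``would force $ad_i$ off its unique rank-$1$ post, strictly decreasing $x_1$.'' This fails precisely in the typical case: the construction fills the remaining places of $a_x$'s list with dummy posts, and only the three neighbors of the unique post $p_y$ with $order(y)=1$ have a real post at rank $1$, so the other $n-3$ real agents each have some dummy post as their \emph{rank-1} post. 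If such an $a_x$ is matched to that dummy post, the edge $(a_x,pd_i)$ is itself a rank-$1$ edge; $ad_i$ is unmatched, yet $x_1$ can still equal $n-2$, so nothing is lost in the first coordinate and your ``earlier coordinates dominate'' argument never gets started. Your opening claim — that rank-maximality \emph{immediately} forces each $ad_i$ to be matched to $pd_i$ because rank-$1$ edges are maximized — has the same flaw: maximizing rank-$1$ edges does not distinguish between matching $pd_i$ to $ad_i$ and matching it to a real agent who holds it at rank $1$.

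The repair, which is exactly what the paper's proof does, is a counting argument over the whole signature rather than over $x_1$. Exactly $n-2$ posts appear at rank $1$ in any preference list (the $n-3$ dummy posts plus the single real post with $order=1$), so $x_1\le n-2$, and $x_1=n-2$ forces every dummy post to be matched along a rank-$1$ edge. Once all dummy posts are consumed at rank $1$, the only post available at any rank $i\ge 2$ is the single real post $p_y$ with $order(y)=i$, so $x_i\le 1$ for all $i\ge 2$; hence $(n-2,1,\ldots,1)$, whose sum is $2n-3$, attains every per-coordinate cap. Now if some $ad_i$ were unmatched, at most $2n-4$ of the $2n-3$ agents are matched, so either $x_1<n-2$ or some coordinate $x_j$ with $j\ge2$ equals $0$; in either case the first coordinate where the signature differs from $(n-2,1,\ldots,1)$ is strictly smaller, so the matching is not rank-maximal — even though $x_1$ may still be $n-2$. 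This is what rules out real agents occupying dummy posts, forces $\{(ad_i,pd_i)\}$ into every rank-maximal matching, and makes your two maps well-defined and mutually inverse. With this step replaced, your argument coincides with the paper's proof.
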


%This completes proof of Lemma~\ref{lem:shpc}.
%Proof appears in Appendix~\ref{sec:app-count}.
Using Lemma~\ref{lem:pm-vs-rmm} and our observation for ties, we conclude the following:
\begin{theorem}
The problem of counting the number of rank-maximal matchings in an instance is
\shP-Complete for both
strict and tied preference lists.
\end{theorem}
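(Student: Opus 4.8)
The plan is to establish \shP-completeness in two parts: membership in \shP\ and \shP-hardness, for both preference models. Membership is straightforward: a rank-maximal matching in $G$ is a matching whose signature equals the (polynomial-time computable) optimal signature, and this property can be verified in polynomial time for any candidate matching; hence the counting problem lies in \shP. The substance of the theorem is the hardness, which I would derive by combining the two reductions already set up in the excerpt with the established \shP-completeness of counting perfect matchings in $3$-regular bipartite graphs due to Dagum and Luby~\cite{DL92}.

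For the tied case, I would invoke the reduction described immediately above the theorem statement. Given a $3$-regular bipartite graph $H$, set $G = H$ with every edge at rank~$1$. Since a $k$-regular bipartite graph always admits a perfect matching, the optimal signature is $(|X|)$ and is achieved exactly by the perfect matchings of $H$; any non-perfect matching leaves some rank-$1$ vertex unmatched and thus has strictly smaller first component. Therefore the rank-maximal matchings of $G$ are precisely the perfect matchings of $H$, giving a parsimonious (count-preserving) reduction, which transfers \shP-hardness to the tied case.

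For the strict case, I would appeal to Lemma~\ref{lem:pm-vs-rmm}, which asserts a one-to-one correspondence between the perfect matchings of $H$ and the rank-maximal matchings of the constructed strict instance $G$. Granting this lemma, the number of rank-maximal matchings in $G$ equals the number of perfect matchings in $H$, so the reduction is again parsimonious and hardness follows. The map $H \mapsto G$ is clearly polynomial-time computable, since the dummy agents/posts and the rank assignments derived from $order(\cdot)$ are constructed in time polynomial in $|H|$.

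The main obstacle, and the only non-routine ingredient, is Lemma~\ref{lem:pm-vs-rmm} itself, whose proof is deferred to the appendix. The subtle direction is verifying that a rank-maximal matching of $G$ must match every dummy agent $ad_i$ to its unique post $pd_i$ and must match each $a_x$ to one of its three ``real'' posts $p_{y_1}, p_{y_2}, p_{y_3}$ (rather than absorbing a dummy post), so that the restriction to the real agents and posts is exactly a perfect matching of $H$; one must argue that the forced structure of the optimal signature precludes any rank-maximal matching from using the high-rank dummy entries in the $a_x$ lists. Once that correspondence is in hand, both halves of the theorem are immediate, and combining them yields \shP-completeness for strict and tied lists simultaneously. \qed
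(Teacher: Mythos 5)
Your proposal is correct and takes essentially the same route as the paper: the tied case by the rank-$1$ identity reduction and the strict case by invoking Lemma~\ref{lem:pm-vs-rmm}, both parsimonious reductions from counting perfect matchings in $3$-regular bipartite graphs \cite{DL92}. Your explicit \shP-membership argument is a small addition the paper leaves implicit, and your description of what Lemma~\ref{lem:pm-vs-rmm} must establish (dummy agents forced onto dummy posts, real agents matched among real posts) matches the paper's appendix proof of that lemma.
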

\subsection{An \fpras\ for Counting Rank-Maximal Matchings}
\label{sec:fpras}
Given the hardness result in Section~\ref{sec:shp-hard}, 
it is unlikely to be able to
count the number
of rank-maximal matchings in an instance in polynomial time. We now show that
there exists a fully polynomial-time randomized 
approximation scheme (\fpras) for the problem. 
%To this end, we give
%a polynomial-time reduction from the problem of counting the number
%of rank-maximal matchings to the problem of counting the number of perfect 
%matchings in a bipartite graph. This reduction 
%implies an FPRAS for counting
%the number of rank-maximal matchings, 
We use the following result by Jerrum~et~al.
\cite{JerrumSV04}:
\begin{theorem}[\cite{JerrumSV04}]
\label{thm:jerrum}
There exists an \fpras\ for the problem of counting the number of perfect 
matchings in a bipartite graph.
\end{theorem}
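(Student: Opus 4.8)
The plan is to prove this via the Markov chain Monte Carlo paradigm: reduce approximate counting to approximate (near-uniform) sampling, then construct a rapidly mixing chain that generates perfect matchings. Let $G=(U\cup V,E)$ be the bipartite graph with $|U|=|V|=n$, and let $\mathcal{M}$ denote its set of perfect matchings; counting $|\mathcal{M}|$ is equivalent to computing the permanent of the bipartite adjacency matrix. First I would reduce counting to sampling by self-reducibility: fixing an ordering of the edges, express $|\mathcal{M}|$ as a telescoping product of ratios, where each ratio compares the weighted count of perfect matchings under two closely related edge-weightings (activities). Each ratio is estimated by drawing samples from the corresponding distribution and recording an empirical average; provided each ratio is bounded below by an inverse polynomial, a Chernoff bound shows that polynomially many samples suffice to obtain a $(1\pm\epsilon)$ estimate with high probability. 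Thus an \fpras\ follows once we can sample perfect matchings almost uniformly in polynomial time.

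The crux is the sampler. The natural chain whose moves stay within $\mathcal{M}$ need not even be connected, so the remedy is to enlarge the state space to $\Omega=\mathcal{M}\cup\mathcal{N}$, where $\mathcal{N}$ is the set of near-perfect matchings with exactly one unmatched (``hole'') vertex on each side. I would partition $\Omega$ by hole pattern: the class of perfect matchings, and for each pair $(u,v)\in U\times V$ the class $\mathcal{M}(u,v)$ of near-perfect matchings with holes at $u$ and $v$. Local moves---removing a matched edge to open a pair of holes, or filling and sliding holes along alternating paths---make $\Omega$ connected. However, a uniform stationary distribution on $\Omega$ could make $\mathcal{M}$ an exponentially small fraction. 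The key idea is to target a \emph{weighted} stationary distribution in which every hole class carries roughly equal total weight; assigning each matching in $\mathcal{M}(u,v)$ weight proportional to $1/|\mathcal{M}(u,v)|$ achieves this, so that $\mathcal{M}$ constitutes at least a $1/(n^2+1)$ fraction and a sample restricted to $\mathcal{M}$ is near-uniform.

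Two tasks then remain: showing the chain mixes in polynomial time, and handling the fact that the ideal weights depend on the very quantities $|\mathcal{M}(u,v)|$ we cannot compute directly. For mixing I would use the multicommodity-flow (canonical-paths) method: for every ordered pair of states, route a unit of demand along a canonically chosen path obtained by decomposing the symmetric difference of the two matchings into alternating paths and cycles and unwinding them in a fixed order. Bounding the congestion---the total weighted demand crossing any single transition---by a polynomial, via an injective encoding that charges the paths through a transition to distinct states, yields a polynomial bound on the mixing time. For the unknown weights I would use a staged annealing over edge activities: start from activities for which the weights are known and the chain trivially mixes, then decrease activities in small multiplicative steps toward the target instance, at each stage drawing samples from the already-mixed chain to estimate the weights needed for the next stage to within a constant factor of ideal. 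Maintaining constant-factor-accurate weights keeps the chain rapidly mixing throughout, and concentration bounds guarantee that polynomially many samples per stage and polynomially many stages suffice.

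The main obstacle is the canonical-paths congestion bound. The heart of the argument is designing the paths together with the injective encoding so that no transition is overloaded; the subtlety is that intermediate states along a canonical path may lie in hole classes different from those of the endpoints, so one must carefully relate the stationary weights of these ``detour'' near-perfect matchings to the weights of the endpoints---precisely where the hole-balancing choice of weights is exploited. A secondary difficulty is the chicken-and-egg interaction between mixing and weight estimation: rapid mixing presupposes good weights, while estimating good weights presupposes rapid mixing. This is resolved by the annealing schedule, which bootstraps from a trivially computable starting point and advances in increments small enough that the previous stage's samples always yield sufficiently accurate weights for the next.
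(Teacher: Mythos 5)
This theorem is not proved in the paper at all---it is imported verbatim from Jerrum, Sinclair and Vigoda \cite{JerrumSV04}, and your proposal is a faithful high-level reconstruction of exactly that proof: the reduction of approximate counting to near-uniform sampling, the enlarged state space of near-perfect matchings partitioned by hole pairs $(u,v)$, the hole-balancing weights making the perfect matchings a $1/(n^2+1)$ fraction of the stationary distribution, the canonical-paths congestion bound, and the annealing bootstrap for the unknown weights $|\mathcal{M}(u,v)|$. So the proposal is correct in approach and matches the cited source; nothing further is needed beyond the (substantial) technical details of the congestion and annealing analyses, which your sketch correctly identifies as the crux.
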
 
We give a polynomial-time reduction from the problem of counting the number
of rank-maximal matchings (denoted as \shRMM) to the problem of counting the number of
perfect matchings in a bipartite graph (denoted as \shBPM). %We refer to these problems as \shRMM\ 
%and \shBPM\ respectively.
%Our reduction and its correctness are due to the following lemma.

\noindent {\bf Reduction from \shRMM\ to \shBPM:}
%\begin{lemma}\label{lem:fpras}
%Given an instance $G$ of the rank-maximal matchings problem, there is a polynomial-time 
%algorithm that outputs a graph $I$ such that the number of perfect matchings 
%in $I$ is equal to $f(G)$ times the number of rank-maximal matchings in $G$,
%where $f(G)$ is a polynomial-time computable function.
%\end{lemma}
%\begin{proof}
Given an instance $G=(\A\cup\p,E)$ of the rank-maximal matchings problem, we
first construct another instance $H$ of the rank-maximal matchings problem,
which is used to get an instance $I$ of the bipartite perfect 
matchings problem. The steps of the construction are as follows: 

\begin{enumerate}
\item For every  $a \in \A$, introduce a dummy last-resort post $\ell(a)$ ranked  $r+1$. This ensures that every rank-maximal matching is $\A$-complete.
\item Let $M$ be any rank-maximal matching in $G$, let $G'$ be the reduced 
graph
obtained by Irving~et~al.'s algorithm (ref. Section~\ref{sec:prelim}).
%, and $G_M$ be the 
%corresponding switching graph.
\item Let $k$ be the number of unmatched posts in $G'$. 
Introduce $k$ dummy applicants
$ad_1,\ldots,ad_k$. The preference list of each dummy applicant consists of
all the posts in $G'$ which are in $\EE_1\cap\ldots\cap\EE_{r+1}$, tied at rank
$r+2$.
\item The instance $H$ consists of all the applicants in $G$ and their 
preference lists in $G$, together with the
dummy applicants and their preference lists introduced above. The set of posts
in $H$ is the same as that in $G$.
\item The instance $I$ of bipartite perfect matchings problem is simply the
reduced graph $H'$, obtained by executing Irving et al.'s algorithm on $H$.
\end{enumerate}
%Using the instance $H$ we now construct the instance $I$ of the bipartite 
%perfect matchings problem.
%In order to construct the instance $I$ of the bipartite perfect matchings
%problem from the instance $H$, we simply execute Irving et al.'s algorithm
%on $H$ and obtain the reduced graph $H'$. The graph $H'$ is our instance $I$.
Correctness of the reduction follows from the following lemma, the proof (in Appendix~\ref{sec:app-count}) uses
the switching graph characterization.%, and appears
%in Section~\ref{sec:app-count} of appendix.
%To prove the correctness of our reduction we show the following:
%\begin{lemma}\label{lem:fpras}
%Given an instance $I$ of rank-maximal matchings, there is a polynomial-time 
%algorithm that outputs a graph $G$ such that the number of perfect matchings 
%in $G$ is equal to $f(I)$ times the number of rank-maximal matchings in $I$,
%where $f(I)$ is a polynomial-time computable function.
%\end{lemma}

%Given an instance $I$ of the rank-maximal matching problem, we construct another
%instance $I'$ such that the pruned reduced graph $G$\footnote{Define this in the
%switching graph section} of $I'$ is an instance of the bipartite perfect 
%matching problem, and the following are satisfied:
\begin{lemma}\label{lem:fpras-correct}
Let $G$ be the rank-maximal matchings instance and let $H$  and $I$ be the rank-maximal matchings
instance and the  bipartite perfect matchings instance respectively constructed as above. Then, the following hold:\\
%\begin{enumerate}
%\item\label{itm:fI} Corresponding to each rank-maximal matching $M$ in $G$, 
$1.$ Corresponding to each rank-maximal matching $M$ in $G$, 
there are exactly $k!$ distinct rank-maximal matchings in $H$.\\
%\item\label{itm:pm} 
%Each rank-maximal matching in $H$ matches all the posts
%as well as applicants and hence is a perfect matching in $I$.\\
$2.$ Each rank-maximal matching in $H$ matches all the applicants and posts, and
all its edges appear in $I$. Hence it is a perfect matching in the instance $I$.\\
%\item\label{itm:nonrmm} A matching in $G$ that is not rank-maximal has no
$3.$ A matching in $G$ that is not rank-maximal has no
corresponding perfect matching in $I$.
%\end{enumerate}
\end{lemma}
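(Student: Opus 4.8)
The plan is to prove the three parts of Lemma~\ref{lem:fpras-correct} in order, leaning on the switching graph characterization (Theorem~\ref{thm:another-rmm}) and on the structural facts about the reduced graph recorded in invariants $(I1)$--$(I4)$ and Lemma~\ref{lem:switch-path}. The central observation I would establish first is that adding the $k$ dummy applicants $ad_1,\ldots,ad_k$, each preferring exactly the posts in $\EE_1\cap\ldots\cap\EE_{r+1}$ of $G'$ at rank $r+2$, changes nothing at ranks $1,\ldots,r+1$: running Irving~et~al.'s algorithm on $H$ reproduces $G'$ through iteration $r+1$, since the dummy edges only appear at rank $r+2$. Thus the partition into even/odd/unreachable vertices for ranks $\le r+1$ is identical in $G$ and $H$, and the posts available to the dummy applicants are precisely the $k$ unmatched posts of any rank-maximal matching of $G$ (by Lemma~\ref{lem:switch-path}~(1), the unmatched posts are exactly those in $\EE_1\cap\ldots\cap\EE_{r+1}$, and there are $k$ of them, all sinks).

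For Part~$1$, I would fix a rank-maximal matching $M$ of $G$ and extend it to $H$ by matching the $k$ dummy applicants to the $k$ posts left unmatched by $M$. Since every rank-maximal matching of $G$ has the same signature and the same number $k$ of unmatched posts (these posts lie in $\EE_1\cap\ldots\cap\EE_{r+1}$ by Lemma~\ref{lem:switch-path}), any bijection between $\{ad_1,\ldots,ad_k\}$ and the $k$ unmatched posts yields a rank-maximal matching of $H$; this gives exactly $k!$ extensions. The key point is that these extensions are all rank-maximal in $H$ and that distinct $M$'s give disjoint families: the restriction of an $H$-rank-maximal matching to the original applicants must be rank-maximal in $G$, because the dummy edges contribute only at rank $r+2$ and leaving any post in $\EE_1\cap\ldots\cap\EE_{r+1}$ unmatched by the original applicants is forced by the common signature. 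So the map $M\mapsto\{k!\text{ extensions}\}$ is a $k!$-to-one correspondence.

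For Part~$2$, I would argue that in $H$ the dummy applicants can always be fully matched to the $k$ posts in $\EE_1\cap\ldots\cap\EE_{r+1}$, so every rank-maximal matching of $H$ is $(\A\cup\{ad_i\})$-complete; combined with the fact that exactly $k$ posts were unmatched by the original applicants, this makes every rank-maximal matching of $H$ match \emph{all} applicants \emph{and} all posts, i.e. perfect. Since $I=H'$ is obtained by running Irving~et~al.'s algorithm on $H$, invariant $(I1)$ guarantees every rank-maximal matching of $H$ is contained in $H'=I$, so all its edges appear in $I$ and it is a perfect matching of $I$. For Part~$3$, I would show the converse containment: any perfect matching of $I$ matches every dummy applicant to a post in $\EE_1\cap\ldots\cap\EE_{r+1}$, freeing exactly those posts for the original applicants, and its restriction to $\A$ is a maximum matching of $G'$ saturating all non-sink structure; by invariant $(I2)$ a maximum matching of $G'$ restricted to each rank $\ell$ is maximum in $G'_\ell$, which forces the restriction to be rank-maximal in $G$. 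Hence a non-rank-maximal matching of $G$ cannot be the restriction of any perfect matching of $I$.

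The main obstacle I anticipate is Part~$1$'s claim that \emph{every} rank-maximal matching of $H$ arises as one of these $k!$ extensions of some rank-maximal matching of $G$ — equivalently, that the dummy applicants cannot ``steal'' posts in a way that reshuffles the matching on the original applicants into a non-rank-maximal configuration. Handling this cleanly requires showing that in any rank-maximal matching of $H$ the dummy applicants occupy exactly the sink posts and the original applicants realize the rank-maximal signature of $G$ on ranks $1,\ldots,r+1$; here I would invoke the switching graph of $H$ and Theorem~\ref{thm:another-rmm}, observing that switching paths and cycles involving dummy edges permute the sink posts among the dummy applicants without altering the matching restricted to $\A$, precisely because those dummy edges all carry rank $r+2$ and the relevant posts are all sinks. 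Once this invariance is pinned down, the counting and the bipartite-perfect-matching correspondence follow directly.
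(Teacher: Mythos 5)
Your treatment of Parts 1 and 2 is essentially the paper's: the algorithm on $H$ coincides with the algorithm on $G$ for the first $r+1$ iterations, so the rank-maximal signature of $H$ is the rank-maximal signature of $G$ followed by $k$, which yields the $k!$-to-one correspondence, and invariant $(I1)$ gives containment of every rank-maximal matching of $H$ in $I=H'$. The genuine gap is in Part 3. You argue that the restriction to $\A$ of a perfect matching of $I$ is a maximum matching of $G'$, and then conclude rank-maximality ``by invariant $(I2)$: a maximum matching of $G'$ restricted to each rank $\ell$ is maximum in $G'_\ell$.'' That inference does not hold: $(I2)$ is a statement about the particular matchings $M_i$ produced by Irving~et~al.'s algorithm, not about arbitrary maximum matchings of $G'$, and the claim you actually need --- every maximum matching of the reduced graph is rank-maximal --- is precisely the nontrivial converse direction of Irving~et~al.'s correctness theorem. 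It is true, but it is not among the invariants $(I1)$--$(I4)$ recorded in this paper, and the intermediate claim you state (restrictions of maximum matchings of $G'$ to ranks $\leq\ell$ are maximum in $G'_\ell$) is not a restatement of $(I2)$; it is the inductive heart of that theorem and would require its own proof. The paper avoids this entirely: it takes a perfect matching $M$ of $I$ and a rank-maximal matching $N$ of $H$ (perfect in $I$ by Part 2), notes that $M\oplus N$ is a disjoint union of alternating cycles, hence of cycles of the switching graph $H_N$, which all have weight zero by Lemma~\ref{lem:rmm-cycle}; applying them to $N$ preserves rank-maximality by Lemma~\ref{lem:0wt-path}, so $M$ is rank-maximal in $H$ --- a self-contained argument using only results already proved in the paper. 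To repair your proof, either invoke Irving~et~al.'s theorem explicitly (with citation or proof), or switch to this symmetric-difference argument.

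A second, smaller problem: in your resolution of the ``main obstacle'' in Part 1, the claim that switching paths and cycles of $H_N$ involving dummy edges ``permute the sink posts among the dummy applicants without altering the matching restricted to $\A$'' is false. All dummy edges of $H_N$ have weight zero and run between posts of $\EE_1\cap\ldots\cap\EE_{r+1}$, and such a post may well be matched to an original applicant; a cycle of $H_N$ can therefore alternate between dummy-matched and $\A$-matched posts (this already occurs with ties, which are unavoidable here since the dummy preference lists are themselves tied), and applying it does change the matching on $\A$ --- to another rank-maximal matching of $G$, but a different one. Fortunately you do not need this claim: the signature argument you give earlier (any rank-maximal matching of $H$ has the rank-maximal signature of $G$ in its first $r+1$ coordinates, so its restriction to $\A$ is rank-maximal in $G$, and the dummies occupy exactly the $k$ posts that restriction leaves unmatched) already pins down the correspondence, and is what the paper's own proof does.
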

The \fpras\ for \shRMM\ involves the following steps:
\begin{enumerate}
\item The reduction from \shRMM\ instance $G$ to \shBPM\ instance $I$, 
\item Running Jerrum et al.'s \fpras\ on $I$ to get an 
approximate count, say $C$, of the number of perfect matchings in $I$,
\item Dividing $C$ by $k!$ to get an approximate count of number of rank-maximal
matchings in $G$.
\end{enumerate}
Steps $1$ and $2$ clearly work in polynomial time. 
For step $3$, note that both $C$ and $k$ are at most $n!$ and can be represented
in $O(n\log n)$ bits, which is
polynomial in the size of $G$. Therefore Step~$3$ also
works in polynomial time. This completes the \fpras\ for \shRMM\ problem. 
%This reduction along with Theorem~\ref{thm:jerrum} gives an FPRAS for counting the number of
%rank-maximal matchings in an instance. 
%\qed
%\end{proof}
%\paragraph{Construction of $I'$:} Execute Irving et al's algorithm on $I$ to get
%a rank-maximal matching $M$. Introduce dummy posts and augment $M$ to make it
%$\A$-complete, as described in Section~\ref{sec:switch}. Construct the pruned
%switching graph $G_M$ and pruned reduced graph $\tilde{G}$. Note that 
%$\tilde{G}$ is independent of $M$.
%For each component $C$ of $\tilde{G}$, introduce as many dummy applicants as
%the number of unmatched posts in $C$. Make a preference list for each dummy
%applicant as all the posts in $C\cap\EE_1\cap\ldots\cap\EE_{r+1}$ tied at rank
%$r+2$. This, together with $I$, forms the instance $I'$.
%
\REM{
When Irving et al's algorithm is executed on $I'$, it proceeds exactly as on 
$I$ for $r+1$ iterations. Hence signature of a rank-maximal matching in $I'$
is the same as that in $I$ for first $r+1$ coordinates. In $r+2$nd iteration,
the dummy edges are considered. For each component $C$ of $\tilde{G}$ with
$k$ unmatched posts, this 
iteration has a complete bipartite graph on rank~$r+2$ edges, with $k$ 
applicants on left and $|C\cap\EE_1\cap\ldots\cap\EE_{r+1}|$ posts on right,
with exactly $k$ posts unmatched. So there are $k!$ ways of matching dummy
applicants to them. Thus corresponding to each rank-maximal matching 
constructed in the first $r+1$ iterations (and hence in $I$), there are $k!$ 
rank-maximal matchings in $C$. If there are $\ell$ components with number of
sinks $c_1,\ldots,c_\ell$ then the number of rank-maximal matchings in $I'$
corresponding to each rank-maximal matching in $I$ is $c_1!c_2!\ldots c_\ell!$.
This is the function $f(I)$. This completes proof of~\ref{itm:fI}.

Each rank-maximal matching in $I'$ matches all the applicants and posts, and
all its edges appear in $G$. Hence it is a perfect matching in $G$.  

We show that every perfect matching in $G$ is a rank-maximal matching in $I'$.
Let there be a perfect matching $M$ in $G$ which is not rank-maximal in $I'$.
Consider a rank-maximal matching $N$ in $I'$. As $M$ and $N$ are both perfect
matchings, $M\oplus N$ is a collection of vertex-disjoint cycles with alternate
edges of $M$ and $N$. Hence these cycles are switching cycles in $G_N$. But
all the switching cycles in $G_N$ have weight $0$, which contradicts the
assumption that $M$ is not rank-maximal.
}

\section{Popularity of Rank-Maximal Matchings}\label{sec:popular}
As mentioned earlier, an instance of the rank-maximal matchings
problem may admit more than one rank-maximal matching. To choose
one rank-maximal matching, it is natural to impose an additional optimality criterion.
Such a question has been considered earlier in the context of popular matchings by \cite{KN08,MI11}
and also in the context of the stable marriage problem \cite{ILG87}.
%The question of finding a matching which is rank-maximal in the set of popular 
%matchings in a given instance has been studied in \cite{KN09},\cite{}. 
The additional notion of optimality that we impose is the notion of
popularity, defined below: 
%To define the notion of popularity, let us translate the applicants preferences over posts
%to their preferences over matchings. 

\begin{definition}[Popular matching]\label{def:pop}
A matching $M$ is {\em more popular  than} matching $M'$ (denoted by $M \succ_p M'$) if the number of applicants that
prefer $M$ to $M'$ is more than the number of applicants that prefer $M'$ to $M$. A matching
$M$ is popular if no matching $M'$ is more popular than $M$.
\end{definition}
An applicant $a$ prefers matching $M$ to $M'$ if either
(i) $a$ is matched in $M$ and unmatched in $M'$, or (ii) $a$ is matched in both and prefers the post $M(a)$ 
to $M'(a)$. We consider the following question: Given an instance of the 
rank-maximal matchings problem,
is there a rank-maximal matching that is popular in the set of all 
rank-maximal matchings? 
%If so, find one such matching. 
We refer to such
a matching as a {\em popular rank-maximal matching}.
There are simple instances in which there is no popular
matching; further there is no popular rank-maximal matching. 
%See appendix for details.
However, if a popular rank-maximal matching exists, it 
seems an appealing choice
since it enjoys both rank-maximality and popularity.
%all rank-maximal matchings}? 
We make partial progress on this question. Using
the switching graph characterization developed in Section~\ref{sec:switch},
we give a simple algorithm to determine if a given rank-maximal matching 
$M$ is a popular rank-maximal matching. If not, our algorithm
outputs a rank-maximal matching $M'$ which is more popular than $M$.
%Here onwards, the term {\em popular} means {\em popular in the set of all 
%rank-maximal matchings}.\footnote{Is this really true? Confirm or remove this
%claim once and for all!}
\paragraph{Outline of the algorithm:}
Given a graph $G = (\A \cup \p, E)$ and a rank-maximal matching $M$ in $G$, the
 algorithm first constructs the switching graph $G_M$
corresponding to $M$.
Now consider the following re-weighted graph $\tilde{G}_M$ 
%which
%is the same as $G_M$ except for the weights in the edges. 
where positive
weights of edges in $G_M$ are replaced by $+1$ weights % in $\tilde{G}_M$ and
and negative weights by $-1$. % edge in $G_M$ is replaced by a $-1$ weight edge in $\tilde{G}_M$.
Thus a $-1$ weight edge $(p_i, p_j)$ in $\tilde{G}_M$ implies
that $M(p_i)$ prefers $p_j$ to $p_i$. 

Let $T$ be a switching path in $G_M$, and let $\tilde{T}$ be the corresponding 
path in $\tilde{G}_M$. 
It is easy to see that if $w(\tilde{T}) < 0$ in $\tilde{G}_M$,
then  $M' = M \cdot T$ is
more popular than $M$. Same holds for a switching cycle in $G_M$. 
Therefore, $M$ is a popular rank-maximal matching, 
if and only if there is no negative-weight path to sink
or negative-weight cycle in $\tilde{G}_M$. 

To check this, we use shortest path 
computations using Bellman-Ford algorithm in a suitably modified graph. 
%If a 
%negative-weight path $\tilde{T}$ to a sink or a negative-weight cycle $\tilde{C}$ is found in $\tilde{G}_M$, 
The
details of the algorithm and proof of the following lemma, which establishes 
correctness, appear in Appendix~\ref{sec:app-popular}.
%Given an instance $G=(\A\cup\p,E)$ and a rank-maximal matching $M$ in $G$, 
%the steps of our algorithm are as below:
%establishes the correctness:% of our algorithm.
\begin{lemma}\label{lem:popular}
A given rank-maximal matching $M$ is popular if and only if there is no 
negative-weight path to a sink or a negative-weight cycle in the re-weighted 
switching graph.
\end{lemma}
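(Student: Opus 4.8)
The plan is to prove both directions of the biconditional in Lemma~\ref{lem:popular} by connecting the popularity comparison between two rank-maximal matchings to the weights in the re-weighted switching graph $\tilde{G}_M$. The key observation, already recorded in the outline, is that a $-1$ edge $(p_i,p_j)$ in $\tilde{G}_M$ records that the applicant $a=M(p_i)$ strictly prefers $p_j$ to $p_i$, a $+1$ edge records that $a$ prefers $p_i$ (its current post) to $p_j$, and a $0$ edge records indifference (a tie). Thus when we apply a switching path $T$ or switching cycle $C$ to obtain $M'=M\cdot T$ (which is again rank-maximal by Lemma~\ref{lem:0wt-path}), the quantity $w(\tilde{T})$ exactly counts the number of applicants who prefer $M'$ to $M$ minus the number who prefer $M$ to $M'$, restricted to the applicants $\A_T$ whose posts lie on the path; every applicant outside the switch is unaffected. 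Hence $M'\succ_p M$ if and only if $w(\tilde{T})<0$, and the same identity holds for cycles.

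For the \emph{if} direction (contrapositive of ``$M$ popular $\Rightarrow$ no negative structure''), I would argue as follows. Suppose there is a negative-weight path $\tilde{T}$ to a sink, or a negative-weight cycle $\tilde{C}$, in $\tilde{G}_M$. The corresponding object $T$ (resp.\ $C$) in $G_M$ is a switching path (resp.\ switching cycle), since every cycle in $G_M$ is a zero-weight switching cycle by Lemma~\ref{lem:rmm-cycle}, and every path to a sink from a post in $\EE_1\cap\cdots\cap\EE_{r+1}$ is a zero-weight switching path by Lemma~\ref{lem:switch-path}(3). Applying the switch yields a rank-maximal matching $M'=M\cdot T$ with, by the identity above, strictly more applicants preferring $M'$ to $M$ than the reverse, so $M'\succ_p M$ and $M$ is not a popular rank-maximal matching.

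For the \emph{only if} direction, I would prove the contrapositive: if $M$ is not popular among rank-maximal matchings, then $\tilde{G}_M$ contains a negative-weight path to a sink or a negative-weight cycle. If $M$ is not popular, there is a rank-maximal matching $M'$ with $M'\succ_p M$. By Theorem~\ref{thm:another-rmm}, $M'$ is obtained from $M$ by applying a collection of vertex-disjoint switching paths and switching cycles in $G_M$. Since these switches act on disjoint applicant sets, the net popularity difference between $M'$ and $M$ is the sum of the per-switch contributions, each equal to $w(\tilde{\cdot})$ of the corresponding re-weighted path or cycle. Because the total is strictly negative (more applicants prefer $M'$), at least one of these switching paths $T$ or cycles $C$ must have $w(\tilde{T})<0$ or $w(\tilde{C})<0$, giving the desired negative-weight structure in $\tilde{G}_M$.

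The main obstacle I anticipate is establishing the weight-to-preference identity cleanly in the presence of ties, i.e.\ verifying that $w(\tilde{T})$ equals exactly the signed count of applicants changing their relative satisfaction. The subtlety is that a single original edge weight $w(p_i,p_j)=rank(a,p_j)-rank(a,p_i)$ can have large magnitude, yet after re-weighting it contributes only $+1$, $-1$, or $0$; one must confirm that the sign of the rank difference (not its magnitude) is precisely what determines whether $a$ prefers $M'$ to $M$, and that ties map correctly to $0$-weight edges so that indifferent applicants contribute nothing to the popularity comparison. Once this per-edge correspondence is pinned down, the decomposition via Theorem~\ref{thm:another-rmm} makes the additive accounting of popularity over vertex-disjoint switches routine, and the reduction to Bellman-Ford detection of negative-weight paths to a sink or negative cycles follows directly.
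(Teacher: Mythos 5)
Your proposal is correct and follows essentially the same route as the paper's proof: the forward direction applies the switch along the negative-weight structure (using Lemmas~\ref{lem:0wt-path}, \ref{lem:rmm-cycle}, and \ref{lem:switch-path}(3) to guarantee the result is rank-maximal) and reads off the popularity gain from the $\pm 1$ edge signs, while the converse uses the decomposition of Theorem~\ref{thm:another-rmm} into vertex-disjoint switches plus an additive pigeonhole argument to exhibit a negative-weight path or cycle. The per-edge sign-to-preference correspondence you flag as a potential obstacle is exactly the observation the paper uses implicitly, and it is immediate from the definition $w(p_i,p_j)=rank(a,p_j)-rank(a,p_i)$.
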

Thus we get an $O(mn)$ time algorithm for checking whether a given rank-maximal
matching is a popular rank-maximal matching, where $m$ and $n$ are number of
edges and vertices in the switching graph respectively.

\REM {
\section{Conclusion}
}

\noindent{\bf Acknowledgment:} We thank Partha Mukhopadhyay for a proof of Lemma \ref{lem:0wt-path}.

\bibliographystyle{abbrv}
\bibliography{references}

\begin{thebibliography}{10}

\bibitem{ACMM04}
D.~J. Abraham, K.~Cechl\'arov\'a, D.~F. Manlove, and K.~Mehlhorn.
\newblock Pareto-optimality in house allocation problems.
\newblock In {\em Proceedings of 15th ISAAC}, pages 3--15, 2004.

\bibitem{AIKM07}
D.~J. Abraham, R.~W. Irving, T.~Kavitha, and K.~Mehlhorn.
\newblock Popular matchings.
\newblock {\em SIAM Journal on Computing}, 37(4):1030--1045, 2007.

\bibitem{DL92}
P.~Dagum and M.~Luby.
\newblock Approximating the permanent of graphs with large factors.
\newblock {\em Theor. Comput. Sci.}, 102(2):283--305, 1992.

\bibitem{HZ79}
A.~Hylland and R.~Zeckhauser.
\newblock The efficient allocation of individuals to positions.
\newblock {\em Journal of Political Economy}, 87(2):293--314, 1979.

\bibitem{Irving03}
R.~W. Irving.
\newblock Greedy matchings.
\newblock {\em Technical Report, University of Glasgow}, TR-2003-136, 2003.

\bibitem{IKMMP06}
R.~W. Irving, T.~Kavitha, K.~Mehlhorn, D.~Michail, and K.~E. Paluch.
\newblock Rank-maximal matchings.
\newblock {\em ACM Transactions on Algorithms}, 2(4):602--610, 2006.

\bibitem{ILG87}
R.~W. Irving, P.~Leather, and D.~Gusfield.
\newblock An efficient algorithm for the ``optimal'' stable marriage.
\newblock {\em Journal of the ACM}, 34(3):532--543, 1987.

\bibitem{JerrumSV04}
M.~Jerrum, A.~Sinclair, and E.~Vigoda.
\newblock A polynomial-time approximation algorithm for the permanent of a
  matrix with nonnegative entries.
\newblock {\em J. ACM}, 51(4):671--697, 2004.

\bibitem{KN08}
T.~Kavitha and M.~Nasre.
\newblock Note: Optimal popular matchings.
\newblock {\em Discrete Applied Mathematics}, 157(14):3181--3186, 2009.

\bibitem{KS06}
T.~Kavitha and C.~D. Shah.
\newblock Efficient algorithms for weighted rank-maximal matchings and related
  problems.
\newblock In {\em Proceedings of 17th ISAAC}, pages 153--162, 2006.

\bibitem{MI11}
E.~McDermid and R.~W. Irving.
\newblock Popular matchings: structure and algorithms.
\newblock {\em Journal of Combinatorial Optimization}, 22(3):339--358, 2011.

\bibitem{Nasre13}
M.~Nasre.
\newblock {Popular Matchings: Structure and Cheating Strategies}.
\newblock In {\em Proceedings of 30th STACS}, pages 412--423, 2013.

\bibitem{Paluch13}
K.~E. Paluch.
\newblock Capacitated rank-maximal matchings.
\newblock In {\em Proceedings of 8th CIAC}, pages 324--335, 2013.

\bibitem{GGL95new}
W.~R. Pulleyblank.
\newblock Handbook of combinatorics (vol. 1).
\newblock chapter Matchings and Extensions, pages 179--232. MIT Press,
  Cambridge, MA, USA, 1995.

\bibitem{Yuan96}
Y.~Yuan.
\newblock Residence exchange wanted: A stable residence exchange problem.
\newblock {\em European Journal of Operational Research}, 90(3):536 -- 546,
  1996.

\end{thebibliography}
%\appendix
\newpage
\begin{appendix}
\section{Details from Section~\ref{sec:switch}}\label{sec:app-switch}
\label{app:sec-switch}

\REM{\noindent {\bf Example instance:} Consider an instance with $\A = \{a_1, \ldots, a_6\}$ and $\p = \{p_1, \ldots, p_7\}$. The
preference lists of the applicants are as given in Figure~\ref{fig:example-ex2}(a). The preference lists
can be read as follows: applicant $a_1$ treats posts $p_1$, $p_2$, and $p_3$ as his rank~$1$, rank~$2$, and rank~$3$
posts respectively. Consider the following rank-maximal matching $M = \{(a_1, p_3), (a_2, p_2), (a_3, p_1), (a_4, p_5), (a_5, p_6), (a_6, p_7)\}$.
Figure~\ref{fig:example-ex2}(b) shows the corresponding switching graph $G_M$. The switching graph has two components,
the one containing $p_4$ is a sink component, whereas the other containing $p_5$ is the non-sink component. We remark that
both $p_1$ and $p_4$ are vertices with no out-going edges, however, only $p_4$ is a sink vertex since it is also unmatched in $M$.
The path $ T = \langle p_3, p_2, p_4\rangle$ is a zero weight path ending in a sink and therefore a switching path in $G_M$.
Thus, applying $T$ to $M$, we get another matching $M' = M \cdot T$ which is also a rank maximal matching in the instance.
The matching $M' = \{(a_1, p_2), (a_2, p_4), (a_3, p_1), (a_4, p_5), (a_5, p_6), (a_6, p_7)\}$.  It is easy to verify that
applying any cycle $C$ (which is in fact a zero weight cycle) to the matching $M$ also leads to a rank-maximal matching in the instance.}

\begin{appendix-lemma}{\ref{lem:rmm-cycle}}
Let $M$ be a \rmm in $G$ and $C$ be a cycle in $G_M$, then
$w(C) = 0$.
\end{appendix-lemma}
\begin{proof}
Let $C$ be a cycle in $G_M$ and let $C'$ denote the corresponding alternating
cycle in $G'$. To prove the Lemma statement, we show that, for any rank $i$, $C'$ has an equal number
of matched and unmatched edges, and hence $w(C)=0$.  We use induction on $i$ to prove that
for any $i$, the cycle $C'$ has equal number of matched and unmatched edges.  
Let us partition the edges of $C'$ as $X_1 \cup  \ldots \cup X_r$, 
where $X_i$ denotes edges of rank~$i$ belonging to $C'$. Note that, for some 
$i$,  $X_i$ may be empty. Now for any $i$, consider $Y_i = \cup_{j=1}^{i}
X_j$. We show by induction on $i$, that for each $i$, any component of $Y_i$ is 
either an even length path or the cycle $C'$ itself.

{\bf Base case:} Let $\ell$ denote the first index for which $X_{\ell}$ is 
non-empty. 
Then each $j<\ell$ trivially satisfies the induction hypothesis.
If $Y_{\ell} = C'$, we are done, since $C'$ is an alternating cycle,
with equal
number of unmatched and matched edges, 
all of rank~$\ell$. 
If $Y_{\ell} \neq C'$, 
then for contradiction,
let $Y_{\ell}$ contain an odd length alternating path 
$T = \langle a_1, p_1, a_2, p_2, \ldots, a_k, p_k \rangle$.
Since all the other edges in $C'$ are of rank greater than $\ell$, and they are
incident on $a_1$ and $p_k$, both $a_1$ and $p_k$ must belong to $\even_{\ell}$ in $G'_{\ell}$
at the end of the $\ell$th iteration of Irving et al.'s algorithm. 
However, since $T$ is present in
$G'$
it must be present in $G'_{\ell}$.
Note that
both $a_1$ and  $p_k$ belong to $\even_{\ell}$ and the path $T$  is an alternating path of odd length.
Now consider  labeling the vertices of $T$ as $\odd_{\ell}$ or $\even_{\ell}$ from both $a_1$ and $p_k$.
It is clear that we either encounter an
$\odd_{\ell} \odd_{\ell}$ edge which must have been deleted in Irving et al.'s 
algorithm or an $\even_{\ell} \even_{\ell}$ edge
which cannot be present in $G$ (by Lemma~\ref{lem:node-class} (c)).
Thus in either case, we get a contradiction. %Therefore, if both the end points 
Hence $T$ must be an even length path,
with equal number of matched and unmatched rank~$\ell$ edges.

{\bf Induction step:}
Assuming the induction hypothesis for some $\ell<r$, the proof for $Y_{\ell+1}$
is similar to that of base case.

The above implies that, for every $X_i$, $C'$ has $|X_i|/2$ matched and unmatched
edges. Hence the corresponding cycle $C$ in $G_{M}$ has zero weight,
and $M\cdot C$ is a rank-maximal matching.
\qed
\end{proof}

\begin{appendix-lemma}{\ref{lem:switch-path}}
Let $M$ be a \rmm in $G$ and $G_M$ be the switching graph
with respect to $M$. The following properties hold:
\begin{enumerate}
\item\label{itm:zero} Let $p$ be an unmatched post in $M$. Then $p \in \even_1 \cap  
\ldots \cap \even_{r+1}$ and therefore  is a sink in $G_M$.
%\begin{lemma}\label{lem:comp}
\item\label{itm:comp} A post $p$ belongs to a \Ecomp iff $p \in \even_{r+1}$. A post $p$ 
belongs to a \Ucomp iff $p \in \un_{r+1}$.
%\end{lemma}
%
%Lemma\ref\ref{lem:0wtpath} and Lemma~\ref{lem:rmm-path} characterize switching
%paths. Lemma~\ref{lem:rmm-path} gives an easy way to identify those vertices
%in a sink component, from which a switching path originates. Its proof 
%is similar to that of Lemma 
%\ref{lem:rmm-cycle} and we omit it. Proof of Lemma~\ref{lem:0wtpath} can be 
%found in appendix.
%
%\begin{lemma}\label{lem:0wtpath}
\item\label{itm:0wtpath} Let $T$ be a path from a post $p$ to some sink $q$ in
$G_M$. Then $w(T)=0$ iff $p\in \EE_1 \cap \ldots \cap \EE_{r+1}$.
% a post such that $p$ has a zero-weight path $T$ to some sink 
%$q$ in $G_M$. Then, $p \in \EE_1 \cap \EE_2 \cap \ldots \cap \EE_{r+1}$.
%\end{lemma}
%\begin{lemma}
%\label{lem:rmm-path}
%\item Let $p \in \EE_1 \cap \EE_2 \cap \ldots \cap \EE_{r+1}$. Then $p$ has a 
%path to some sink in $G_M$. Moreover, every such path  to a sink has weight zero.
\end{enumerate}
\end{appendix-lemma}
\begin{proof}
\begin{enumerate}
\item 
%\begin{proof}
The proof follows by observing that every rank-maximal matching keeps vertices
 in $\odd_i \cup \un_i$ matched
for every $i = 1 \ldots r+1$. Thus if $p$ is unmatched in a rank-maximal 
matching $M$, then $p \in \even_1 \cup \ldots \cup \even_{r+1}$.
%\end{proof}
%\begin{proof}
\item Consider a post $p \in \even_{r+1}$. If $p$ is unmatched in $M$, then by 
\ref{itm:zero} above,
$p$ is a sink vertex and therefore belongs to a sink component. 
Now, assume that $p$ is matched but
since it belongs to $\even_{r+1}$, $p$ has an even length alternating path starting at
an unmatched node $p'$ with respect to $M$ in $G'$. Let the alternating path be denoted by $\langle p = p_1, a_1, \ldots, p_k, a_k, p_{k+1} = p'\rangle$.
Note that for every $i=1, \ldots k$, we have $M(a_i)= p_i$. Further, every unmatched edge $(a_i, p_{i+1})$ is
of the form $\odd_{r+1} \even_{r+1}$. Therefore no such unmatched edge gets 
deleted in the $(r+1)$st iteration of Irving et al.'s algorithm. 
This implies that the directed path $\langle p = p_1, p_2, \ldots, p_{k+1} =p'\rangle$ is present in $G_M$.
Thus, $p$ belongs to the \Ecomp that contains $p'$.

To prove the other direction
let $\X$ be a \Ecomp in $G_M$ and $p'$ be a sink in $\X$. For the sake of 
contradiction, let $p' \in \X$ and $p \in \un_{r+1}$.
Recall that $\odd_{r+1} \cap \p = \emptyset$.

Now since $p$ and $p'$ lie in the same component, there is an  (undirected) path
between $p$ and $p'$ in the underlying undirected component of $\X$.
Let $\langle p= p_1, p_2, \ldots, p_k= p'\rangle$ denote this undirected path.
Since $p_1 \in \un_{r+1}$ and $p_k \in \even_{r+1}$, it implies that there 
exists an index $i$, $1\le i \le k-1$,
such that $p_i \in \un_{r+1}$ and $p_{i+1} \in \even_{r+1}$. 
Note that,
by the above argument, $p_{i+1}$ has a directed path $T$ to some sink $q$ 
in $\X$.

Consider the two possible directions
 for the edge between $p_i$ and $p_{i+1}$ in $G_M$:
\begin{enumerate}
\item If the edge is directed from $p_i$ to $p_{i+1}$ in $G_M$,
then 
%we show that $p_i$ has a directed path to some sink of $\X$. 
the path $T$ from $p_{i+1}$ to the sink $q$ in $\X$ can be prefixed with 
the edge $(p_i, p_{i+1})$ to get a directed path from $p_i$
to $q$. This implies that there is an even-length alternating path
with respect to $M$ in $G'$ from $q$ to $p_i$.
This contradicts the fact that $p_i \in \un_{r+1}$. 

\item Finally, if the edge is directed from $p_{i+1}$ to
$p_i$ in $G_M$, then it implies that $p_{i+1}$ is matched in $M$ and let 
$M(p_{i+1}) = a_{i+1}$. Since $p_{i+1} \in \even_{r+1}$
this implies that $a_{i+1} \in \odd_{r+1}$.
Thus the edge $(p_{i+1}, p_i)$ in $G_M$ implies that there is an $\odd_{r+1} 
\un_{r+1}$ edge $(a_{i+1}, p_i)$ in the graph $G'$.
However, such an $\odd_{r+1} \un_{r+1}$ edge
should have been deleted by Irving et al.'s algorithm 
Hence such an edge cannot be present in $G_M$
contradicting the fact that $p \in \un_{r+1}$. Thus, every post $p$ belonging 
to a sink component belongs to  $\even_{r+1}$.
\end{enumerate}

The above proof  along with the fact that $\p \cap \odd_{r+1} = \emptyset$ immediately implies that a post $p$ belongs to a non-sink 
component iff $p \in \un_{r+1}$. 
%This finishes the proof of the lemma.
%\qed
%\end{proof}
%\begin{lemma}\label{lem:0wtpath}
%\item Let $p$ be a post such that $p$ has a zero-weight path $T$ to some sink 
%$q$ in $G_M$. Then, $p \in \EE_1 \cap \EE_2 \cap \ldots \cap \EE_{r+1}$.
%\end{lemma}
%\begin{proof}
%(This is Prajakta's proof from email.)
\item First assume that $p$ has a path $T$ to a sink
and $w(T) = 0$. Our goal is to show that $p \in \even_1 \cap \ldots \cap \even_{r+1}$. Since $p$ has a path to a sink, $p \in \EE_{r+1}$.
Assume for the sake of contradiction that $p\in \odd_i\cup \U_i$ for some 
$i\leq r$.
Let $M' = M \cdot T$ be the matching obtained by switching along the path $T$.
Since $T$ has zero weight, from Lemma~\ref{lem:0wt-path},
we know 
that the matching $M'$ is a rank-maximal matching in $G$. Note that, since $q$
is unmatched in $M$, $p$ is unmatched in $M'$. Thus we have obtained
 a rank-maximal matching $M'$ in $G$ which leaves $p$ unmatched.
%Now, assume for contradiction that $p \in \odd_i \cup \U_i$ for some $i$ such
%that $1 \le i \le r$. 
By the
invariants of Irving et al.'s algorithm, mentioned in Section~\ref{sec:prelim}, 
%(Prajakta, do we need to add this
%in the invariants that we have listed after the algo. Also I feel that we should list only the invariants
%that we use in the paper. Answer: We should add the invariant that every $M_i$
% is a maximum matching in $G'_i$.) 
we know that every vertex belonging to 
$\odd_i \cup \U_i$ remains matched
in every rank-maximal matching of $G$. However, we have already obtained a 
matching, namely $M'$, which leaves $p$ unmatched.
This contradicts the assumption that $p\in \odd_i\cup\U_i$ for some $i$,
and completes the proof that $p \in \even_1 \cap \ldots \cap \even_{r+1}$.

Finally, consider  the other direction, that is assume that $p \in \even_1 \cap \ldots \cap \even_{r+1}$
and $p$ has a path $T$ to a sink. To show that $w(T) = 0$, we use arguments similar to proof of Lemma~\ref{lem:rmm-cycle}.
\end{enumerate}
This completes the proof of the Lemma.
\qed
\end{proof}
\begin{appendix-theorem}{\ref{thm:another-rmm}}
%For any switching path $T$ (or switching cycle $C$) in $G_M$, the matching $M'=M \cdot T$ ($M' = M \cdot C$ resp.) is a rank-maximal matching in $G$.
Every rank-maximal matching $M'$ in $G$ can be obtained from $M$ by applying to $M$ 
vertex-disjoint switching paths and switching cycles in $G_M$.
%to $M$ zero or more vertex-disjoint switching paths and switching cycles
%in each of the
%of sink components of $G_M$ together with zero or more vertex-disjoint switching cycles in each of the non-sink components of $G_M$.
\end{appendix-theorem}
\begin{proof}
%We have already shown using Lemma~\ref{lem:0wt-path} that for any switching path $T$ (resp. switching cycle $C$), the matching 
%%$M' = M \cdot T$ (resp. $M' = M \cdot C$) is 
%a rank maximal matching in $G$. 
Consider any rank-maximal matching  $M'$ in $G$. We show that $M'$ can be obtained from $M$ by applying a set
of vertex-disjoint switching paths and switching cycles of $G_M$.
Consider $M \oplus M'$ which is a collection of vertex-disjoint paths and 
cycles in $G$. Also note that the cycles and paths contain alternating edges
of $M$ and $M'$.  
We show that the paths and
cycles in $M\oplus M'$ are switching paths and switching cycles in $G_M$.

From the invariants of Irving et al.'s algorithm mentioned in Section~\ref{sec:prelim}, all the
edges of $M$ and $M'$ are also present in $G'$. %\footnote{If needed, modify the
%invariant or something}.
A cycle in $M\oplus M'$ has alternating edges of $M$
and $M'$, and hence has a corresponding directed cycle in $G_M$. As proved in 
Lemma~\ref{lem:rmm-cycle}, every cycle in $G_M$ is a switching cycle.

Now we consider paths in $M\oplus M'$. 
All the paths are of even length, since all the rank-maximal matchings are of 
the same cardinality.
%Since the graph $G$ is bipartite, it is clear that $M \oplus M'$ is a collection of  vertex-disjoint even length paths and even length cycles in $G$.
Let $T_G =\langle p_1, a_1, \ldots,p_k, a_k, p_{k+1}\rangle$ be any even-length
path in $M \oplus M'$  with $p_{k+1}$ unmatched in $M$ and $p_1$ unmatched in 
$M'$.
For every $1 \le i \le k$, let
 $M(p_i) = a_i$.
%Note that every unmatched edge $(a_i, p_{i+1})$
%is of the form $\odd_{r+1} \even_{r+1}$ and hence none of these unmatched edges
%get deleted in the Step~\ref{step:del2} of Algorithm~\ref{algo:rmm-matching}. 
%{TODO Prajakta we need to see if we want to delete edges outside the loop of our algorithm}
It is easy to see that the path $T = \langle p=p_1, p_2, \ldots, 
p_{k+1}=p'\rangle$ is
present in $G_M$ and it ends in a sink $p'$.
Our goal is to show that $w(T) = 0$. For this, we prove that $p_1\in \EE_1
\cap\ldots \cap \EE_{r+1}$. 
Note that $M'$ is a rank-maximal matching in $G$ and
$M'$ leaves the post $p= p_1$ unmatched. As every post in $\odd_i\cup \un_i$
for any $i$ is matched in every rank-maximal matching, $p_1\notin \odd_i\cup 
\un_i$ for $1\leq i\leq r+1$. Therefore $p_1 \in \even_1 \cap \ldots \cap 
\even_{r+1}$; 
%otherwise
%$M'$ is a rank-maximal matching which leaves a vertex in $\odd_i \cup \un_i$ for some $i= 1 \ldots r$ unmatched -- a contradiction
%to (TODO state this as an invariant of RMM).
Thus, using Lemma~\ref{lem:switch-path}, we can conclude that the path $T$ has weight $w(T) = 0$ in $G_M$, and hence is a switching path in $G_M$.

Applying these switching paths
and cycles to $M$ gives us the desired matching $M'$, thus completing the proof.
\qed
\end{proof}
\REM{
\begin{appendix-theorem}{\ref{thm:rmm-pairs}}
The set of rmm-pairs for an instance $G= (\A \cup \p, E)$  can be computed in 
$O(\min(n+r,r\sqrt{n})m)$ time.%, which is asymptotically same as the
\end{appendix-theorem}
\begin{proof}
We note that, by Theorem~\ref{thm:another-rmm}, an edge  $(a,p)$ is a rmm pair 
{\em iff}
(i) $(a,p)\in M$ or, 
 (ii) the edge $(M(a),p)$ belongs to a switching cycle in $G_M$ or,
(iii) the edge $(M(a),p)$ belongs to a switching path in $G_M$.
%\begin{enumerate}
%\item \label{itm:1} $(a,p)\in M$ or,
%\item \label{itm:2} the edge $(M(a),p)$ belongs to a switching cycle in $G_M$ or,
%\item \label{itm:3} the edge $(M(a),p)$ belongs to a switching path in $G_M$.
%\end{enumerate}
Condition~(i) can be checked by computing a rank-maximal matching $M$ which takes $O(\min(n+r,r\sqrt{n})m)$ time.
%by Algorithm~\ref{algo:rmm-matching}. 
%Comment by Meghana: Prajakta: I got such a reviewer comment in my paper to say this ahead
%of computing scc.  So reworded.
By Lemma~\ref{lem:rmm-cycle}, every cycle in $G_M$ has zero weight, therefore we conclude that every edge $(p_i, p_j)$ belonging to a strongly connected component of $G_M$ is associated with two RMM pairs, since by the definition of $G_M$, $M(a)= p_i$ and $(p_j, a) \in E(G’)$.
Therefore, Condition~(ii) can be checked by computing
strongly connected components of $G_M$, which takes time linear in the size of 
$G_M$.
%by Lemma~\ref{lem:rmm-cycle}, every cycle in $G_M$ has zero weight and hence
%is a switching cycle. 

To check Condition~(iii), note that a post $p$ has 
a zero-weight path to a sink if and only if $p\in \EE_1\cap\ldots\cap
\EE_{r+1}$ by Lemma~\ref{lem:switch-path}~(3).
Moreover, all the paths from such a post $p$ to a sink have weight zero.
Therefore, performing a DFS from each $p\in\EE_1\cap\ldots\cap\EE_{r+1}$
and marking all the edges encountered in the DFS gives all the pairs which 
satisfy Condition~(iii).
%The time complexity follows from observing that the time taken by Condition~(i) dominates the
%rest of the steps and the time taken in Condition~(i) follows from the algorithm of Irving~et~al.\cite{IKMMP06}.
\qed
\end{proof}
}

\section{Proofs from Section~\ref{sec:count}}\label{sec:app-count}
\label{app:sec-count}
\begin{appendix-lemma}{\ref{lem:pm-vs-rmm}}
Let $H$ be a 3-regular bipartite graph and let $G$ be the corresponding rank-maximal
matchings instance constructed by the reduction in Section~\ref{sec:count}. 
A matching $M$ is a perfect in $H$ iff $M$ is a rank-maximal matching in $G$.
\end{appendix-lemma}
\begin{proof}
A perfect matching $M$ in $H$ can be extended in a unique way to a perfect 
matching $M'$ in $G$ as follows: $M' = M \cup  \mbox { }\{(ad_i,pd_i)\mid 1\leq i\leq n-3\}$. 
The matching $M'$ has the following $n$-tuple as its signature: $\sigma(M') = (n-2,1,1,\ldots,1)$. Hence, a 
rank-maximal matching in $G$ should have a signature that is at least as good
as $\sigma(M')$. We argue that $\sigma(M')$ is the best possible signature in $G$.

Consider the posts in $\p$ that are rank $1$ posts for some applicant $a \in \A$.
There are exactly $n-2$ such posts: the $n-3$ dummy posts $pd_1, \ldots, pd_{n-3}$
and one post $p_y$ such that $order(p_y) = 1$.
Therefore, any rank-maximal matching in $G$ cannot match more than $n-2$ applicants 
to their rank-1 posts. 
Moreover, all these $n-3$  posts are {\em odd} or {\em unreachable}
in the graph on rank $1$ edges and hence are always matched to applicants that treat them as
their rank $1$ posts. %In
%particular, all the dummy posts are matched at rank $1$ by any rank-maximal 
%matching in $G$. 
At each of the ranks $2 \leq i \leq n$, there is exactly one
post $p_y\in \p$ that is ranked $i$. Thus, it is easy to see that $\sigma(M')$ is the best possible signature for any 
matching in $G$. Therefore $M'$ is a rank-maximal matching in $G$. Further, 
note that  $M'$ was obtained by extending a perfect matching $M$ in $H$. This also implies that, for every perfect matching in $H$, there is 
a unique rank-maximal matching in $G$.

Now consider a rank-maximal matching $M$ in $G$. We claim that such a matching
has to include the edges $\{(ad_i,pd_i)\mid 1\leq i \leq n-3\}$. If not, then
for some $i = 1 \ldots n-3$ applicant $ad_i$ remains unmatched and therefore $\sigma(M) \prec (n-2, 1, \ldots, 1)$.
Similarly, to achieve the signature $(n-2, 1, \ldots, 1)$, all applicants $a_x$
and therefore all posts $p_y$ should be matched amongst themselves.
Thus, the matching $M' = M \setminus \{(ad_i,pd_i)\mid 1\leq i \leq n-3\}$ is
a perfect matching in $H$. 
This shows that there is a one to one correspondence between the rank-maximal 
matchings in $G$ and perfect matching in $H$.
\qed
\end{proof} 

\begin{appendix-lemma}{\ref{lem:fpras-correct}}
Let $G$ be the rank-maximal matchings instance and let $H$  and $I$ be the rank-maximal matchings
instance and the  bipartite perfect matchings instance respectively as constructed  in Section~\ref{sec:fpras}. Then, the following hold:
\begin{enumerate}
\item\label{itm:fI} Corresponding to each rank-maximal matching $M$ in $G$, 
there are exactly $k!$ distinct rank-maximal matchings in $H$.
\item\label{itm:pm} 
%Each rank-maximal matching in $H$ matches all the posts
%as well as applicants and hence is a perfect matching in $I$.\\
Each rank-maximal matching in $H$ matches all the applicants and posts, and
all its edges appear in $I$. Hence it is a perfect matching in the instance $I$.\item\label{itm:nonrmm} A matching in $G$ that is not rank-maximal has no
corresponding perfect matching in $I$.
\end{enumerate}
\end{appendix-lemma}
\begin{proof}
\begin{enumerate}
\item It is easy to see that Irving et al.'s algorithm proceeds on $H$ 
exactly as on 
$G$ for $r+1$ iterations. Hence signature of a rank-maximal matching in $H$
is the same as that in $G$ for first $r+1$ co-ordinates. In $(r+2)$nd iteration,
the newly added applicants and their edges are considered. 
%For each sink component $\X_i$ of $G_M$ with
%$c_i$ unmatched posts, 
This iteration has a complete bipartite graph on rank $r+2$ edges, with $k$ 
applicants on left and $\mid \EE_1\cap\ldots\cap\EE_{r+1}\mid$ posts on right,
with exactly $k$ posts unmatched. So there are $k!$ ways of matching these 
newly added 
applicants amongst the unmatched posts. Thus, corresponding to each rank-maximal
matching constructed in the first $r+1$ iterations (and hence in $G$), there 
are $k!$ 
rank-maximal matchings in $G_{r+2}$. 
\item This is immediate from the construction.
\item
We show that every perfect matching in $I$ is a rank-maximal matching in $H$.
Let there be a perfect matching $M$ in $I$ which is not rank-maximal in $H$.
Consider a rank-maximal matching $N$ in $H$. As $M$ and $N$ are both perfect
matchings, $M\oplus N$ is a collection of vertex-disjoint cycles with alternate
edges of $M$ and $N$. Hence these cycles are switching cycles in the switching 
graph $H_N$. But
all the switching cycles in $H_N$ have weight $0$. By Lemma~\ref{lem:0wt-path},
this contradicts the assumption that $M$ is not rank-maximal.

%If there are $\ell$ sink components in $G_M$ with the
%$i$th sink component having $c_i$ many sinks, 
%then the number of rank-maximal matchings in $H$
%corresponding to each rank-maximal matching in $G$ is $c_1!c_2!\ldots c_\ell!$.
%MN 21/4: Prajakta these lines below is trailing from the earlier proof. Please see. 
% I commented them. 
%This is the function $f(G)$, and the final reduced graph $H'$ is the instance
%$I$ of bipartite perfect matching problem.
\end{enumerate} 
This completes the proof of the lemma.
\qed
\end{proof}
\section{Details from Section~\ref{sec:popular}}\label{sec:app-popular}
\label{app:sec-popular}
\paragraph{The algorithm for popular rank-maximal matching}

\begin{enumerate}
%\item Introduce dummy last resort post $\ell_a$ for each $a\in\A$. 
%Construct the switching graph $G_M$.
\item Make $M$ $\A$-complete by adding dummy last-resort posts at rank $r+1$
as described in Section~\ref{sec:prelim}. Here $r$ is the maximum length of any
preference list in $G$. 
%Let $M$ be a rank-maximal matching in $G$ and $G_M$ be the corresponding switching graph. 
Remove those edges $(p_i,p_j)$ from $G_M$, where $(M(p_i),p_j)$
is not a rank-maximal pair. 
\item Construct a re-weighted graph $\tilde{G}_M$ by replacing every positive-weight in $G_M$
by $+1$ and negative weight by $-1$.
\item For each sink-component $\X_i$ in the $\tilde{G}_M$, add a source vertex $s_i$ 
and a sink $t_i$. For every non-sink vertex $p \in \EE_1 \cap \ldots \cap 
\EE_{r+1}$ add an edge
$(s_i, p)$ of weight zero. For every sink $p$, add an edge 
$(p, t_i)$ of weight zero. 
% Connect $s$ to non-sink vertices in $\EE_1\cap\ldots\EE_r$ and 
%all sinks to $t$ by edges of weight zero. 
Thus $s_i$ and $t_i$ are new source
and sink for the component $\X_i$.
For each non-sink component of $\tilde{G}_M$, 
%TODO Prajakta -- please check the previous line said $G_M$. I feel $\tilde{G}_M$ is correct.
choose an arbitrary vertex $v$ as source. 

\item Run Bellman-Ford algorithm from each source to find if there is a 
negative-weight cycle in any component or a negative-weight $s_i$ to $t_i$ path
in any sink-component $\X_i$. 
\item If there exists a negative-weight path $T$ or cycle $C$ in the above graph, then conclude that $M$ is not a popular rank-maximal matching.
%popular amongst the set of rank maximal matchings, else $M$ is popular. 
Find such a cycle $C$ or path $T$ %a negative-weight cycle or a negative-weight
%$s_i$ to $t_i$ path, say $T$, 
and output $M'=M\cdot C$ or $M'=M\cdot T$ respectively, as a rank-maximal 
matching more popular than $M$. Otherwise $M$ is a popular rank-maximal 
matching.
%TOD Prajakta -- is the above correct?
\end{enumerate}
%\noindent {\bf Example instance:} Here we give a simple instance which does not admit 
%any popular matching. Further, amongst all the rank-maximal matchings, none of them is popular.
%The instance consists of $3$ applicants and 3 posts, that is, $\A = \{a_1, a_2, a_3\}$ and
%$\p = \{p_1, p_2, p_3\}$.
%The preference list of each applicant is as follows:
%each $a_i$, for $i = 1,2,3$, prefers $p_1$ to $p_2$, and $p_2$ to $p_3$.
%Consider the three symmetrical matchings $M_1 = \{(a_1, p_1)$, $(a_2, p_2)$, $(a_3, p_3)\}$,
%$M_2 = \{(a_1, p_3)$, $(a_2, p_1)$, $(a_3, p_2)\}$ and
%$M_3 = \{(a_1, p_2)$, $(a_2, p_3)$, $(a_3, p_1)\}$.
%None of these matchings is popular, since $M_1 \prec_p M_2$, $M_2 \prec_p M_3$, and $M_3 \prec_p M_1$.
%%In fact the instance admits 6 maximum cardinality matchings, none of which is popular.
%Further, all of the maximum cardinality matchings have a signature of $(1, 1, 1)$ and
%are rank-maximal in the instance, but none of them is popular even amongst the set of
%rank-maximal matchings.
\begin{appendix-lemma}{\ref{lem:popular}}
A given rank-maximal matching $M$ is popular if and only if there is no negative
weight path to a sink or a negative-weight cycle in the re-weighted switching 
graph.
\end{appendix-lemma}
\begin{proof}
We prove that, if there is a negative-weight cycle or a negative-weight path
to a sink in $\tilde{G}_M$, then a matching $N$ obtained by switching along 
such a path or cycle is more popular than $M$.

Consider a negative-weight cycle, say $C=(p_1,\ldots,p_k,p_1)$ in $\tilde{G}_M$. Then
All the applicants, such that weight of $(M(p_i),p_{i+1})$ is negative 
(positive), get a post of strictly better (worse) rank in $M\cdot C$. As $C$
has more edges of negative-weight than those of positive-weight, more 
applicants prefer $M\cdot C$ over $M$ than those who prefer $M$ over
$M\cdot C$. Thus $M\cdot C$ is more popular than $M$ and hence $M$ is not 
popular. Similar argument holds for a path to sink.

Now, let there be a rank-maximal matching $N$ that is more popular than $M$.
The matching $N$ can be obtained from $M$ by application of a set $S$ of 
vertex-disjoint 
switching cycles/paths. Those applicants $a$ such that $M(a)$ is not a part
of any switching cycle or path are indifferent between $M$ and $N$. 
An applicant $a$ prefers $N$ ($M$) over $M$ ($N$) when he gets a better 
(worse) ranked
post in $N$ ($M$) than in $M$ ($N$). But $a$ gets $M(a)$ in $M$ and a post $p$ 
in $N$, where
$(M(a),p)$ is a part of a switching cycle/path in $S$. 
But then this edge should have
a negative (positive) weight in $G_M$ and hence a $-1$ ($+1$) weight in $\tilde{G}_M$.
As more applicants prefer $N$ over $M$, there should be more negative-weight
edges than positive-weight edges in the switching paths/cycles in $S$. 
Hence there is at least one switching cycle/path in $S$ that has more negative-
weight edges than positive-weight edges, which is a negative-weight cycle/path
in $\tilde{G}_M$.

Note that removal of those edges from $G_M$ which do not correspond to rank-maxi
mal pairs ensures that a negative-weight cycle, if present, is reachable from
one of the chosen sources.
\qed
\end{proof} 
\end{appendix}

\end{document}